\lstdefinestyle{customJ}{
  belowcaptionskip=1\baselineskip,
  breaklines=true,
  frame=L,
  xleftmargin=\parindent,
  language=Java,
  showstringspaces=false,
  basicstyle=\footnotesize\ttfamily,
  keywordstyle=\bfseries\color{green!40!black},
  commentstyle=\itshape\color{purple!40!black},
  identifierstyle=\color{blue},
  stringstyle=\color{orange},
}
\tikzstyle{state}+=[minimum size = 6mm, inner sep=0,outer sep=1]
\tikzset{->,>=stealth'}
\newcommand{\todojan}[1]{\todo{\textbf{Jan:\\}#1}}
\algrenewcommand{\algorithmiccomment}[1]{\hskip1.5em \textbackslash *  #1 * \textbackslash}
\newcolumntype{L}{X}
\newcolumntype{R}{>{\raggedleft\arraybackslash}X}
\newcolumntype{C}{>{\centering\arraybackslash}X}
\NewDocumentCommand{\todo}{m}{%
	\begin{tikzpicture}[remember picture, baseline=-0.75ex]%
	\node [coordinate] (inText) {};%
	\end{tikzpicture}%
	%
	\marginpar{%
		\begin{tikzpicture}[remember picture, font=\scriptsize]%
		\draw node[draw=red, text width = 3.5cm, inner sep=0.3mm] (inNote){#1};%
		\end{tikzpicture}%
	}%
	\begin{tikzpicture}[remember picture, overlay]%
	\draw[draw=red]%
	([yshift=-0.2cm] inText)%
	-| (inNote.south);%
	\end{tikzpicture}%
}%
\DeclarePairedDelimiter{\delimabs}{\lvert}{\rvert}
\DeclarePairedDelimiter{\delimnorm}{\lVert}{\rVert}
\DeclarePairedDelimiter{\delimpospart}{\lgroup}{\rgroup^+}
\DeclarePairedDelimiterX{\deliminner}[2]{\lange}{\rangle}{#1, #2}
\DeclarePairedDelimiter{\delimcardinality}{\lvert}{\rvert}
\DeclarePairedDelimiter{\delimset}{\lbrace}{\rbrace}
\DeclarePairedDelimiter{\delimtuple}{(}{)}
\DeclarePairedDelimiter{\delimlistt}{[}{]}
\DeclarePairedDelimiter{\delimfun}{(}{)}
\NewDocumentCommand{\abs}{sm}{\IfBooleanTF{#1}{\delimabs{#2}}{\delimabs*{#2}}}
\NewDocumentCommand{\norm}{sm}{\IfBooleanTF{#1}{\delimnorm{#2}}{\delimnorm*{#2}}}
\NewDocumentCommand{\pospart}{sm}{\IfBooleanTF{#1}{\delimpospart{#2}}{\delimpospart*{#2}}}
\NewDocumentCommand{\negpart}{sm}{\IfBooleanTF{#1}{\delimnetpart{#2}}{\delimnetpart*{#2}}}
\NewDocumentCommand{\inner}{sm}{\IfBooleanTF{#1}{\deliminner{#2}}{\deliminner*{#2}}}
\NewDocumentCommand{\cardinality}{sm}{\IfBooleanTF{#1}{\delimcardinality{#2}}{\delimcardinality*{#2}}}
\NewDocumentCommand{\set}{sm}{\IfBooleanTF{#1}{\delimset*{#2}}{\delimset{#2}}}
\NewDocumentCommand{\tuple}{sm}{\IfBooleanTF{#1}{\delimtuple{#2}}{\delimtuple*{#2}}}
\NewDocumentCommand{\closure}{sm}{\IfBooleanTF{#1}{\delimclosure{#2}}{\delimclosure*{#2}}}
\NewDocumentCommand{\listt}{sm}{\IfBooleanTF{#1}{\delimlistt{#2}}{\delimlistt*{#2}}}
\NewDocumentCommand{\fun}{smm}{\IfBooleanTF{#1}{{#2}\delimfun{#3}}{{#2}\delimfun*{#3}}}
\NewDocumentCommand{\funMacro}{smm}{\IfNoValueTF{#3}{#1}{\fun{#2}{#3}}}
\DeclareMathOperator{\ExistsOp}{\exists}
\DeclareMathOperator{\ForallOp}{\forall}
\NewDocumentCommand{\Exists}{gg}{\IfNoValueTF{#1}{\ExistsOp}{\ExistsOp #1. \, #2}}
\NewDocumentCommand{\Forall}{gg}{\IfNoValueTF{#1}{\ForallOp}{\ForallOp #1. \, #2}}
\newcommand{\unionSym}{\cup}
\newcommand{\strictunionBin}{\mathbin{\dot{\unionSym}}}
\newcommand{\intersectionSym}{\cap}
\newcommand{\intersectionBin}{\mathbin{\intersectionSym}}
\newcommand{\UnionSym}{\bigcup}
\newcommand{\strictunion}{\strictunionBin}
\newcommand{\intersection}{\intersectionBin}
\newcommand{\Union}{\UnionSym}
\newcommand{\Naturals}{\mathbb{N}}
\newcommand{\Distributions}{\mathcal{D}}
\NewDocumentCommand{\convto}{G{}}{\xrightarrow{#1}}
\NewDocumentCommand{\weakto}{G{}}{\xrightharpoonup{#1}}
\NewDocumentCommand{\weakstarto}{G{}}{\xrightharpoonup[*]{#1}}
 \DeclareDocumentCommand{\diff}{D<>{} O{}  D(){}}{\Delta_{#1}^{#2}\ifthenelse{\isempty{#3}}{}{(#3)}}
\DeclareMathOperator*{\argmax}{arg\, max}
\DeclareMathOperator*{\argmin}{arg\, min}
\DeclareDocumentCommand{\post}{D<>{} O{} D(){}}{\mathsf{Post}_{#1}^{#2}\ifthenelse{\isempty{#3}}{}{(#3)}}
\DeclareMathOperator{\leaves}{\mathbin{\mathop{exits}}}
\newcommand{\leaving}{exiting}
\DeclareMathOperator{\stays}{\mathbin{\mathop{\mathsf{stays\_in}}}}
\newcommand{\eqdef}{\vcentcolon=}
\newcommand{\qee}{\hfill$\triangle$} 
\NewDocumentCommand{\distributions}{d()}{\funMacro{\mathcal{D}}{#1}}
\newcommand{\MECs}{\mathsf{MEC}}
\DeclareDocumentCommand{\val}{D<>{} O{}  D(){} t'}{\mathsf{V}_{#1}^{\IfBooleanTF{#4}{\prime #2}{#2}}\ifthenelse{\isempty{#3}}{}{(#3)}}
\DeclareDocumentCommand{\ub}{D<>{} O{}  D(){} t'}{\mathsf{U}_{#1}^{\IfBooleanTF{#4}{\prime #2}{#2}}\ifthenelse{\isempty{#3}}{}{(#3)}}
\DeclareDocumentCommand{\gub}{D<>{} O{}  D(){} t'}{\mathsf{G}_{#1}^{\IfBooleanTF{#4}{\prime #2}{#2}}\ifthenelse{\isempty{#3}}{}{(#3)}}
\DeclareDocumentCommand{\lb}{D<>{} O{}  D(){} t'}{\mathsf{L}_{#1}^{\IfBooleanTF{#4}{\prime #2}{#2}}\ifthenelse{\isempty{#3}}{}{(#3)}}
\DeclareDocumentCommand{\game}{D<>{} O{} D(){} t'}{\mathsf{G}_{#1}^{\IfBooleanTF{#4}{\prime}{}#2}\ifthenelse{\isempty{#3}}{}{(#3)}}
\DeclareDocumentCommand{\transition}{D<>{} O{} D(){}}{\rightarrow_{#1}^{#2}\ifthenelse{\isempty{#3}}{}{(#3)}}
\newcommand{\Mc}{\mathsf{M}}
\newcommand{\SG}{\textrm{SG}}
\newcommand{\SGs}{\textrm{SGs}}
\DeclareDocumentCommand{\G}{D<>{} O{} t' D(){}}{\mathsf{G}_{#1}^{\IfBooleanTF{#3}{\prime}{}#2}\ifthenelse{\isempty{#4}}{}{(#4)}}
\DeclareDocumentCommand{\exGame}{D<>{} O{} t'  D(){}}{\mathsf{G}_{#1}^{\IfBooleanTF{#3}{\prime}{}#2}\ifthenelse{\isempty{#4}}{}{(#4)}=(\states<#1>[\IfBooleanTF{#3}{\prime}{}#2],\states<\Box\ifthenelse{\isempty{#1}}{}{,#1}>[\IfBooleanTF{#3}{\prime}{}#2],\states<\circ\ifthenelse{\isempty{#1}}{}{,#1}>[\IfBooleanTF{#3}{\prime}{}#2],\istate<#1>[\IfBooleanTF{#3}{\prime}{}#2],\actions<#1>[\IfBooleanTF{#3}{\prime}{}#2],\Av<#1>[\IfBooleanTF{#3}{\prime}{}#2],\trans<#1>[\IfBooleanTF{#3}{\prime}{}#2],\target,\sink)}
\DeclareDocumentCommand{\states}{D<>{} O{}  t'}{\mathit{S}_{#1}^{\IfBooleanTF{#3}{\prime~#2}{#2}}}
\DeclareDocumentCommand{\state}{D<>{} O{}  t'}{\mathsf{s}_{#1}^{\IfBooleanTF{#3}{\prime #2}{#2}}}
\DeclareDocumentCommand{\istate}{D<>{} O{} t'}{\mathsf{s}_{0\ifthenelse{\isempty{#1}}{}{,#1}}^{\IfBooleanTF{#3}{\prime~#2}{#2}}}
\newcommand{\initstate}{\state<0>}
\DeclareDocumentCommand{\trans}{D<>{} O{} t' D(){} D(){}}{\delta_{#1}^{\IfBooleanTF{#3}{\prime}{}#2}\ifthenelse{\isempty{#4}}{}{(#4)}\ifthenelse{\isempty{#5}}{}{(#5)}}
\DeclareDocumentCommand{\Av}{D<>{} O{} t' D(){}}{\mathsf{Av}_{#1}^{\IfBooleanTF{#3}{\prime}{}#2}\ifthenelse{\isempty{#4}}{}{(#4)}}
\DeclareDocumentCommand{\F}{D<>{} O{} t' D(){}}{\mathsf{F}_{#1}^{\IfBooleanTF{#3}{\prime}{}#2}\ifthenelse{\isempty{#4}}{}{(#4)}}
\newcommand{\Path}{\rho}
\DeclareDocumentCommand{\Path}{D<>{} O{} t' D(){}}{\path<#1>[#2]\IfBooleanTF{#3}{'}{}(#4)}
\DeclareDocumentCommand{\path}{D<>{} O{} t' D(){}}{\rho_{#1}^{\IfBooleanTF{#3}{\prime}{}#2}\ifthenelse{\isempty{#4}}{}{(#4)}}
\newcommand{\fpath}{\mathsf{w}}
\DeclareDocumentCommand{\Paths}{D<>{} O{} t' D(){}}{\Omega_{#1}^{\IfBooleanTF{#3}{\prime}{}#2}\ifthenelse{\isempty{#4}}{}{(#4)}}
\newcommand{\straa}{\sigma}
\newcommand{\strab}{\tau}
\DeclareDocumentCommand{\strategy}{D<>{} O{} D(){}
  t*}{{\IfBooleanTF{#4}{\tau}{\sigma}}_{#1}^{#2}\ifthenelse{\isempty{#3}}{}{(#3)}}
\DeclareDocumentCommand{\actions}{D<>{} O{} t' d()}{{\IfNoValueTF{#4}{\mathsf{A}}{\fun{\mathsf{A}}{#4}}}_{#1}^{\IfBooleanTF{#3}{\prime~#2}{#2}}}
\DeclareDocumentCommand{\action}{D<>{} O{} t'}{\mathsf{a}_{#1}^{\IfBooleanTF{#3}{\prime#2}{#2}}}
\newcommand{\mec}{\mathsf{MEC}}
\newcommand{\scc}{\mathsf{SCC}}
\newcommand{\bscc}{\mathsf{BSCC}}
\newcommand{\pr}{\mathbb P}
\newcommand{\best}{\mathsf{best}}
\newcommand{\Vis}{\mathsf{Vis}}
\DeclareDocumentCommand{\target}{D<>{} O{}
  t'}{\mathfrak{1}_{#1}^{\IfBooleanTF{#3}{\prime}{}#2}}
\DeclareDocumentCommand{\fail}{D<>{} O{} t'}{\mathsf{\bot}_{#1}^{\IfBooleanTF{#3}{\prime}{}#2}}
\newcommand{\EXPLOREBREAK}{\mathsf{SIM\_TOO\_LONG}}
\newcommand{\UPDATE}{\mathsf{UPDATE}}
\newcommand{\ADJUST}{\mathsf{DEFLATE}}
\newcommand{\GETSUCC}{\mathsf{GETSUCC}}
\DeclareDocumentCommand{\CEC}{D<>{} O{} D(){}
  t*}{\IfBooleanTF{#4}{\simple~}{}\ifthenelse{\isempty{#3}}{}{#3\textit{-}}\textrm{BEC}_{#1}^{#2}}
\newcommand{\BCEC}{\Box\textit{-}\CEC}
\newcommand{\COLLAPSEFUN}{\mathsf{COLLAPSE}}
\newcommand{\simple}{simple}
\newcommand{\Simple}{Simple}
\NewDocumentCommand{\exit}{D<>{}D[]{}D(){}}{\mathsf{bestExit}_{#1}^{#2}\ifthenelse{\isempty{#3}}{}{(#3)}}
\newcommand{\FIND}{{\textsf{FIND\_MSEC}}}
\newcommand{\rSTA}{\mathsf{runSampleTrial}}
\newcommand{\drawcirc}{\node[draw,circle,minimum size=.7cm, outer sep=1pt]}
\newcommand{\drawbox}{\node[draw,rectangle,minimum size=.7cm, outer sep=1pt]}
\newcommand{\drawdummy}{\node[minimum size=0,inner sep=0]}
\newcommand{\para}[1]{\noindent\textbf{#1} \ }
\newcommand{\sink}{\mathfrak 0}
\renewcommand{\circ}{\bigcirc}
\newcommand{\myspace}{\vspace*{-1em}}
\newcommand{\myspaceb}{\vspace*{-0.5em}}
\title{Value Iteration for Simple Stochastic Games: Stopping Criterion and Learning Algorithm\thanks{This research was funded in part by 
the Studienstiftung des deutschen Volkes project ``Formal methods for analysis of attack-defence diagrams'', the Czech Science Foundation grant No.~\mbox{18-11193S}, and the German Research Foundation (DFG) project KR 4890/2-1 ``Statistical Unbounded Verification''.}}
\author{Edon Kelmendi \and Julia Kr\"amer \and Jan K\v{r}et{\'i}nsk{\'y} \and Maximilian Weininger}
\institute{Technical University of Munich} 
\begin{document}

\maketitle

\vspace{-2em}

\begin{abstract}
Simple stochastic games can be solved by value iteration (VI), which yields a sequence of under-approximations of the value of the game.
This sequence is guaranteed to converge to the value only in the limit.
Since no stopping criterion is known, this technique does not provide any guarantees on its results.
We provide the first stopping criterion for VI on simple stochastic games.
It is achieved by additionally computing a convergent sequence of \emph{over-approximations} of the value, relying on an analysis of the game graph.
Consequently, VI becomes an anytime algorithm returning the approximation of the value and the current error bound.
As another consequence, we can provide a simulation-based asynchronous VI algorithm, which yields the same guarantees, but without necessarily exploring the whole game graph.
\end{abstract} 

\vspace{-2.5em}

\section{Introduction} \label{sec:intro}
\vspace{-0.8em}
\para{Simple stochastic game} (SG) \cite{condonComplexity} is a zero-sum two-player game played on a graph by Maximizer and Minimizer, who choose actions in their respective vertices (also called states).
Each action is associated with a probability distribution determining the next state to move to.
The objective of Maximizer is to maximize the probability of reaching a given target state; the objective of Minimizer is the opposite.

Stochastic games constitute a fundamental problem for several reasons. 
From the theoretical point of view, the complexity of this problem\footnote{Formally, the problem is to decide, for a given $p\in[0,1]$ whether Maximizer has a strategy ensuring probability at least $p$ to reach the target.} is known to be in $\mathbf{UP}\cap\mathbf{coUP}$~\cite{doi:10.1287/mnsc.12.5.359} , but no polynomial-time algorithm is known.
Further, several other important problems can be reduced to SG, for instance parity games, mean-payoff games, discounted-payoff games and their stochastic extensions can all be reduced to SG \cite{DBLP:journals/corr/abs-1106-1232}.
The task of solving SG is also polynomial-time equivalent to solving perfect information Shapley, Everett and Gillette games \cite{DBLP:conf/isaac/AnderssonM09}.
Besides, the problem is practically relevant in verification and synthesis.
SG can model reactive systems, with players corresponding to the controller of the system and to its environment, where quantified uncertainty is explicitly modelled.
This is useful in many application domains, ranging from smart energy management \cite{DBLP:journals/fmsd/ChenFKPS13} to autonomous urban driving \cite{DBLP:conf/qest/ChenKSW13}, robot motion planning \cite{DBLP:journals/algorithmica/LaValle00} to self-adaptive systems \cite{DBLP:conf/icse/CamaraMG14}; for various recent case studies, see e.g. \cite{DBLP:journals/ejcon/SvorenovaK16}.
Finally, since Markov decision processes (MDP) \cite{puterman} are a special case with only one player, SG can serve as abstractions of large MDP \cite{DBLP:journals/fmsd/KattenbeltKNP10}.

\para{Solution techniques}
There are several classes of algorithms for solving SG, most importantly strategy iteration (SI) algorithms \cite{doi:10.1287/mnsc.12.5.359} and value iteration (VI) algorithms \cite{condonComplexity}.
Since the repetitive evaluation of strategies in SI is often slow in practice, VI is usually preferred, similarly to the special case of MDPs \cite{atva17}.
For instance, the most used probabilistic model checker PRISM \cite{prism} and its branch PRISM-Games \cite{PRISM-games} use VI for MDP and SG as the default option, respectively.
However, while SI is in principle a precise method, VI is an approximative method, which converges only in the limit.
Unfortunately, there is no known stopping criterion for VI applied to SG.
Consequently, there are no guarantees on the results returned in finite time.
Therefore, current tools stop when the difference between the two most recent approximations is low, and thus may return arbitrarily imprecise results \cite{BVI}.

\para{Value iteration with guarantees} 
In the special case of MDP,
in order to obtain bounds on the imprecision of the result, one can employ a \emph{bounded} variant of VI \cite{BRTDP,atva} (also called \emph{interval iteration} \cite{BVI}).
Here one computes not only an under-approximation, but also an over-approximation of the actual value as follows.
On the one hand, iterative computation of the least fixpoint of Bellman equations yields an under-approximating sequence converging to the value.
On the other hand, iterative computation of the greatest fixpoint yields an over-approximation, which, however, does not converge to the value.
Moreover, it often results in the trivial bound of $1$.
A solution suggested for MDPs \cite{atva,BVI} is to modify the underlying graph, namely to collapse end components.
In the resulting MDP there is only one fixpoint, thus the least and greatest fixpoint coincide and both approximating sequences converge to the actual value.
In contrast, for general SG no procedure where the greatest fixpoint converges to the value is known.
In this paper we provide one, yielding a stopping criterion.
We show that the pre-processing approach of collapsing is not applicable in general and provide a solution on the original graph.
We also characterize SG where the fixpoints coincide and no processing is needed. 
The main technical challenge is that states in an end component in SG can have different values, in contrast to the case of MDP.

\para{Practical efficiency using guarantees}
We further utilize the obtained guarantees to practically improve our algorithm.
Similar to the MDP case~\cite{atva}, the quantification of the error allows for ignoring parts of the state space, and thus a speed up without jeopardizing the correctness of the result.
Indeed, we provide a technique where some states are not explored and processed at all, but their potential effect is still taken into account in the lower and upper bound on the result.
The information is further used to decide the states to be explored next and the states to be analyzed in more detail.
To this end, simulations and learning are used as tools.
While for MDP this idea has already demonstrated speed ups in orders of magnitude \cite{atva,cav17},
this paper provides the first technique of this kind for \SG.

\para{Our contribution}
 is summarized as follows: \vspace*{-0.5em}
\begin{itemize}
	\item 
We introduce a VI algorithm yielding both under- and over-approximation sequences, both of which converge to the value of the game.
Thus we present the first stopping criterion for VI on SG and the first anytime algorithm with guaranteed precision.
We also characterize when a simpler solution is sufficient.
	\item
We provide a learning-based algorithm, which preserves the guarantees, but is in some cases more efficient since it avoids exploring the whole state space.
	\item
We evaluate the running times of the algorithms experimentally, concluding that obtaining guarantees requires an overhead that is either negligible or mitigated by the learning-based approach. 
\end{itemize}

\vspace{-0.5em}
\para{Related work}
The works closest to ours are the following. As mentioned above, \cite{atva,BVI} describe the solution to the special case of MDP. While \cite{atva} also provides a learning-based algorithm, \cite{BVI} discusses the convergence rate and the exact solution. 
The basic algorithm of \cite{BVI} is implemented in PRISM \cite{DBLP:conf/cav/Baier0L0W17} and the learning approach of \cite{atva} in \textsc{Storm} \cite{DBLP:conf/cav/DehnertJK017}.
The extension for SG where the interleaving of players is severely limited (every end component belongs to one player only) is discussed in \cite{MatPhD}.

Further, in the area of probabilistic planning, bounded real-time dynamic programming \cite{BRTDP} is related to our learning-based approach. 
However, it is limited to the setting of stopping MDP where the target sink or the non-target sink is reached almost surely under any pair of strategies and thus the fixpoints coincide.
Our algorithm works for general SG, not only for stopping ones, without any blowup.

For SG, the tools implementing the standard SI and/or VI algorithms are PRISM-games \cite{PRISM-games}, 
GAVS+ \cite{DBLP:conf/tacas/ChengKLB11} and GIST \cite{DBLP:conf/cav/ChatterjeeHJR10}.
The latter two are, however, neither maintained nor accessible via the links provided in their publications any more.

Apart from fundamental algorithms to solve SG, there are various practically efficient heuristics that, however, provide none or weak guarantees, often based on some form of learning \cite{DBLP:journals/ai/BrafmanT00,DBLP:conf/ijcnn/LiL08,DBLP:conf/ijcai/WenT16,DBLP:conf/saso/TcheukamT16,DBLP:journals/tac/ArslanY17,DBLP:journals/tsmc/BusoniuBS08}.
Finally, the only currently available way to obtain any guarantees through VI is to perform 
$\gamma^2$ iterations and then round to the nearest multiple of $1/\gamma$, yielding the value of the game with precision $1/\gamma$ \cite{visurvey};
here $\gamma$ cannot be freely chosen, but it is a fixed number, exponential in the number of states and the used probability denominators.
However, since the precision cannot be chosen and the number of iterations is always exponential, this approach is infeasible even for small games.


\para{Organization of the paper} Section \ref{sec:prelim} introduces the basic notions and revises value iteration. 
Section~\ref{sec:example} explains the idea of our approach on an example. 
Section~\ref{sec:main} provides a full technical treatment of the method as well as the learning-based variation.
Section~\ref{sec:exper} discusses experimental results and Section~\ref{sec:concl} concludes.
The appendix gives technical details on the pseudocode as well as the conducted experiments and provides more extensive proofs to the theorems and lemmata; in the main body, there are only proof sketches and ideas.
\vspace{0.7em}


\section{Preliminaries} \label{sec:prelim}
\vspace*{-0.8em}
\subsection{Basic definitions}

A 
{probability distribution} on a finite set $X$ is a mapping $\trans: X \to [0,1]$, such that $\sum_{x\in X} \trans(x) = 1$.
The set of all probability distributions on $X$ is denoted by $\Distributions(X)$.
%
%
Now we define stochastic games, in literature often referred as simple stochastic games or stochastic two-player games with a reachability objective.

\begin{definition}[\SG]
	A \emph{stochastic game ($\SG$)} is a tuple 
	$(\states,\states<\Box>,\states<\circ>,\initstate,\actions,\Av,\delta,\mathfrak 1,\mathfrak 0)$,
	where $\states$ is a finite set of \emph{states} partitioned
	\ into the sets $\states<\Box>$ and $\states<\circ>$ of states of the player \emph{Maximizer} and \emph{Minimizer}, respectively, 
	$\initstate,\target,\sink \in \states$ is the \emph{initial} state, \emph{target} state, and \emph{sink} state, respectively, $\actions$ is a finite set of \emph{actions}, $\Av: \states \to 2^{\actions}$ assigns to every state a set of \emph{available} actions, and $\trans: \states \times \actions \to \distributions(\states)$ is a \emph{transition function} that given a state $\state$ and an action $\action\in \Av(\state)$ yields a probability distribution over \emph{successor} states.
	
	A \emph{Markov decision process (MDP)} is a special case of $\SG$ where $\states<\circ> = \emptyset$. 
\end{definition}
We assume that $\SGs$ are non-blocking, so for all states $\state$ we have $\Av(\state) \neq \emptyset$.
Further, $\target$ and $\sink$ only have one action and it is a
self-loop with probability~$1$.
Additionally, we can assume that the SG is preprocessed so that all states with no path to $\target$ are merged with $\sink$.

For a state $\state$ and an available action $\action \in \Av(\state)$, we denote the set of successors by $\post(\state,\action) := \set{\state' \mid \trans(\state,\action,\state') > 0}$.
Finally, for any set of states $T \subseteq \states$, we use $T_\Box$ and $T_\circ$ to denote the states in $T$ that belong to Maximizer and Minimizer, whose states are drawn in the figures as $\Box$ and $\circ$, respectively. 
An example of an $\SG$ is given in Figure~\ref{SGex}. 
 

\begin{figure}[t]

\myspace\myspace
\centering
\begin{tikzpicture}
\drawdummy (init) at (0,0) {};
\drawcirc (p) at (1,0) {$\mathsf{p}$};
\drawbox (q) at (3,0) {$\mathsf{q}$};
\drawdummy (mid) at (4,0) {};
\drawbox (1) at (6,0.5) {$\target$ };
\drawcirc (0) at (6,-0.5)  {$\sink$};

\draw[->] (init) to (p);
\draw[->]  (p) to[bend left] node [midway,anchor=south] {$\mathsf{a}$}(q) ;
\draw[->]  (q) to [bend left] node [midway,anchor=north] {$\mathsf{b}$} (p);
\draw[-] (q) to node [midway,anchor=south] {$\mathsf{c}$} (mid) ;
\draw[->] (mid) to node [below] {$\sfrac13$} (0);
\draw[->] (mid) to node [above] {$\sfrac13$} (1);
\draw[->] (mid) to [bend left,out=270,in=270] node[above] {$\sfrac13$} (q) ;
\draw[->]  (0) to[loop right]  node [midway,anchor=west] {$\mathsf{e}$} (0);
\draw[->]  (1) to [loop right] node [midway,anchor=west] {$\mathsf{d}$} (1) ;

\end{tikzpicture}
\caption{An example of an $\SG$ with $\states = \set{\mathsf{p},\mathsf{q},\target,\sink}$, $\states<\Box> = \set{\mathsf{q},\target}$, $\states<\circ> = \set{\mathsf{p},\sink}$, the initial state $\mathsf{p}$ and the set of actions $\actions = \set{\mathsf{a},\mathsf{b},\mathsf{c},\mathsf{d},\mathsf{e}}$; $\Av(\mathsf{p})=\set{\mathsf{a}}$ with $\trans(\mathsf{p},\mathsf{a})(\mathsf{q})=1$;  $\Av(\mathsf{q})=\set{\mathsf{b},\mathsf{c}}$ with $\trans(\mathsf{q},\mathsf{b})(\mathsf{p})=1$ and $\trans(\mathsf{q},\mathsf{c})(\mathsf{q}) = \trans(\mathsf{q},\mathsf{c})(\target) = \trans(\mathsf{q},\mathsf{c})(\sink) = \frac{1}{3}$. 
	For actions with only one successor, we do not depict the transition probability $1$.
}
\label{SGex}
\myspace
\end{figure}


The semantics of SG is given in the usual way by means of strategies and the induced Markov chain and the respective probability space, as follows.
An \emph{infinite path} $\path$ is an infinite sequence $\path = \state<0> \action<0> \state<1> \action<1> \dots \in (\states \times \actions)^\omega$, such that for every $i \in \Naturals$, $\action<i>\in \Av(\state<i>)$ and $\state<i+1> \in \post(\state<i>,\action<i>)$.
\emph{Finite path}s are defined analogously as elements of $(\states \times \actions)^\ast \times \states$.
%
Since this paper deals with the reachability objective, we can restrict our attention to memoryless strategies, which are optimal for this objective.
We still allow randomizing strategies, because they are needed for the learning-based algorithm later on.
A \emph{strategy} of Maximizer or Minimizer is a function $\straa: \states<\Box> \to \distributions(\actions)$ or $\states<\circ> \to \distributions(\actions)$, respectively, such that $\straa(\state) \in \distributions(\Av(\state))$ for all $\state$.
We call a strategy \emph{deterministic} if it maps to Dirac distributions only.
Note that there are finitely many deterministic strategies.
A pair $(\straa,\strab)$ of strategies of Maximizer and Minimizer induces a Markov chain $\G[\straa,\strab]$ where the transition probabilities are defined as $\trans(\state, \state') = \sum_{\action \in \Av(\state)} \straa(\state, \action) \cdot \trans(\state, \action, \state')$ for states of Maximizer and analogously for states of Minimizer, with $\straa$ replaced by $\strab$.
The Markov chain 
induces a unique probability distribution $\pr_{\state}^{\straa,\strab}$ over measurable sets of infinite paths \cite[Ch.~10]{BaierBook}. 

We write $\Diamond \target:=\set{\path \mid \exists i \in \Naturals.~\path(i)=\target}$ to denote the (measurable) set of all paths which eventually reach $\target$. 
For each $\state\in\states$, we define
the \emph{value} in $\state$ as 
\[\val(\state) \eqdef \sup_{\straa} \inf_{\strab} \pr_{s}^{\straa,\strab}(\Diamond \target)= \inf_{\strab} \sup_{\straa}\pr_{s}^{\straa,\strab}(\Diamond \target),\]\myspace

\noindent where the equality follows from~\cite{leastReadablePaperJanEverRead}.
We are interested not only in $\val(\initstate)$, but also its $\varepsilon$-approximations and the corresponding ($\varepsilon$-)optimal strategies for both players.


\medskip

Now we recall a fundamental tool for analysis of MDP called end components.
We introduce the following notation.
	Given a set of states $T \subseteq \states$, a state $\state \in T$ and an action $\action \in \Av(\state)$, we say that $(\state,\action) \leaves T$ if $\post(\state,\action) \not\subseteq T$.
We define an end component of a SG as the end component of the underlying MDP with both players unified.
\begin{definition}[EC]
\label{def:EC}
A non-empty set $T\subseteq \states$ of states is an \emph{end component (EC)} if there is a non-empty set $B \subseteq \Union_{\state \in T} \Av(s)$ of actions such that 
	\begin{enumerate}
		\item for each $\state \in T, \action \in B \intersection \Av(\state)$ we do \emph{not} have $(\state,\action) \leaves T$,
		\item for each $\state, \state' \in T$ there is a finite path $\fpath = \state \action<0> \dots \action<n> \state' \in (T \times B)^* \times T$, i.e. the path stays inside $T$ and only uses actions in $B$.
	\end{enumerate}
\end{definition}
Intuitively, ECs correspond to bottom strongly connected components of the Markov chains induced by possible strategies, so for some pair of strategies all possible paths starting in the EC remain there. 
An end component $T$ is a \emph{maximal end component (MEC)} if there is no other end component $T'$ such that $T \subseteq T'$.
Given an $\SG$ $\G$, the set of its MECs is denoted by $\mec(\G)$ and can be computed in polynomial time~\cite{CY95}.

\subsection{(Bounded) value iteration}

The value function $\val$ satisfies the following system of equations, which is referred to as the \emph{Bellman equations}:
\begin{equation}\label{eq:Vs}
\val(\state) =  \begin{cases} \max_{\action \in \Av(\state)}\val(\state,\action)		&\mbox{if } \state \in \states<\Box>  \\
\min_{\action \in \Av(\state)}\val(\state,\action) &\mbox{if } \state \in \states<\circ>\\
1 &\mbox{if } \state =\target \\
0 &\mbox{if } \state =\sink 
\end{cases}
\end{equation}
where\footnote{\label{fn:overload}Throughout the paper, for any function $f:S\to[0,1]$ we overload the notation and also write $f(\state,\action)$ meaning $\sum_{s' \in S} \trans(\state,\action,\state') \cdot f(\state')$.}
\begin{eqnarray}\label{eq:Vsa}
\val(\state,\action) \eqdef \sum_{s' \in S} \trans(\state,\action,\state') \cdot \val(\state')
\end{eqnarray}
Moreover, $\val$ is the \emph{least} solution to the Bellman equations, see e.g.~\cite{visurvey}. 
To compute the value of $\val$ for all states in an $\SG$, one can thus utilize the iterative approximation method
%
%
\emph{value iteration (VI)} as follows.
We start with a lower bound function $\lb<0>\colon\states \to [0, 1]$ such that $\lb<0>(\target)=1$ and, for all other $\state \in \states$, $\lb<0>(\state) 
=0$.
Then we repetitively apply Bellman updates (\ref{eq:Lsa}) and (\ref{eq:Ls}) 
\begin{align}
\lb<n>(\state,\action) &\eqdef \sum_{s' \in S} \trans(\state,\action,\state') \cdot \lb<n-1>(\state')\label{eq:Lsa}\\
\lb<n>(\state) &\eqdef  
\begin{cases} 
\max_{\action \in \Av(\state)}\lb<n>(\state,\action) &
\mbox{if } \state \in \states<\Box>  \\
\min_{\action \in \Av(\state)}\lb<n>(\state,\action) &
\mbox{if } \state \in \states<\circ>
\end{cases}\label{eq:Ls}
\end{align}
until convergence.
%
Note that convergence may happen only in the limit even for such a simple game as in Figure \ref{SGex}.
The sequence is monotonic, at all times a \emph{lower} bound on $V$, i.e. $\lb<i>(\state)\leq\val(\state)$ for all $\state\in\states$, and the least fixpoint satisfies $\lb[*]:=\lim_{n \to \infty} \lb<n> = \val$.


Unfortunately, there is no known stopping criterion, i.e.\ no guarantees how close the current under-approximation is to the value \cite{BVI}.
The current tools stop when the difference between two successive approximations is smaller than a certain threshold, which can lead to arbitrarily wrong results \cite{BVI}.

For the special case of MDP, it has been suggested to also compute the greatest fixpoint \cite{BRTDP} and thus an \emph{upper} bound as follows. The function $\gub: \states \to [0, 1]$ is initialized for all states $\state \in \states$ as $\gub<0>(\state) = 1$ except for 
$\gub<0>(\sink) = 0$. 
Then we repetitively apply updates (\ref{eq:Lsa}) and (\ref{eq:Ls}), where $\lb$ is replaced by $\gub$.
The resulting sequence $\gub<n>$ is monotonic, provides an upper bound on $\val$ and the greatest fixpoint $\gub[*]:=\lim_n \gub<n>$ is the greatest solution to the Bellman equations on $[0,1]^S$.

This approach is called \emph{bounded value iteration (BVI)} (or \emph{bounded real-time dynamic programming (BRTDP)} \cite{BRTDP,atva} or \emph{interval iteration} \cite{BVI}).
If $\lb[*]=\gub[*]$ then they are both equal to $\val$ and we say that \emph{BVI converges}.
BVI is guaranteed to converge in MDP if the only ECs are those of $\target$ and $\sink$ \cite{atva}.
Otherwise, if there are non-trivial ECs they have to be ``collapsed''\footnote{All states of an EC are merged into one, all leaving actions are preserved and all other actions are discarded. 
For more detail see Appendix \ref{app:coll}}.
Computing the greatest fixpoint on the modified MDP results in another sequence $\ub<i>$ of upper bounds on $\val$, converging to $\ub[*]:=\lim_n \ub<n>$.
Then BVI converges even for general MDPs, $\ub[*]=\val$ \cite{atva}, when transformed this way.
The next section illustrates this difficulty and the solution through collapsing on an example.

In summary, all versions of BVI discussed so far and later on in the paper follow the pattern of Algorithm~\ref{alg:gen}.
In the naive version, $\UPDATE$ just performs the Bellman update on $\lb$ and $\ub$  according to Equations (\ref{eq:Lsa})~and~(\ref{eq:Ls}).\footnote{For the straightforward pseudocode, see Algorithm~\ref{alg:UPDATEVI} in Appendix~\ref{sec:UPDATEVI}.}
For a general MDP, 
$\ub$ does not converge to $\val$, but to $\gub[*]$, and thus the termination criterion may never be met if $\gub[*](\initstate)-\val(\initstate) > 0$.
If the ECs are collapsed in pre-processing then $\ub$ converges to $\val$.

For the general case of \SG, the collapsing approach fails and this paper provides another version of BVI where $\ub$ converges to $\val$, based on a more detailed structural analysis of the game.

	\myspace\myspaceb

\begin{algorithm}[H]

	\caption{Bounded value iteration algorithm}\label{alg:gen}
	\begin{algorithmic}[1]
		\Procedure{BVI}{precision $\epsilon>0$}
		\For {$\state \in \states$} \Comment{Initialization}
		\State $\lb(\state)=0$ ~~\Comment{Lower bound}
		\State $\ub(\state)=1$ ~~\Comment{Upper bound}
		\EndFor
		\State $\lb(\target) = 1$ ~~~~~~~\Comment{Value of sinks is determined a priori}
		\State $\ub(\mathfrak 0)=0$\medskip
		
		\Repeat
		\State $\UPDATE(\lb,\ub)$ ~~~~~~\Comment{Bellman updates or their modification}
		\Until{$\ub(\initstate) - \lb(\initstate) < \epsilon$}
		\Comment{Guaranteed error bound}
		\EndProcedure
	\end{algorithmic}

\end{algorithm}

\myspace\myspace



\vspace{-1em}

\section{Example}\label{sec:example}

\vspace{-0.8em}

In this section, we illustrate the issues preventing BVI convergence and our solution on a few examples.
Recall that $\gub$ is the sequence converging to the greatest solution of the Bellman equations, while $\ub$ is in general any sequence over-approximating $\val$ that one or another BVI algorithm suggests.

Firstly, we illustrate the issue that arises already for the special case of MDP. 
Consider the MPD of Figure \ref{lBCEC} on the left. 
Although $\val(\mathsf s)=\val(\mathsf t)=0.5$, we have $\gub<i>(\mathsf s)=\gub<i>(\mathsf t)=1$ for all $i$.
Indeed, the upper bound for $\mathsf t$ is always updated as the maximum of $\gub<i>(\mathsf t,\mathsf c)$ and $\gub<i>(\mathsf t,\mathsf b)$.
Although $\gub<i>(\mathsf t,\mathsf c)$ decreases over time, $\gub<i>(\mathsf t,\mathsf b)$ remains the same, namely equal to $\gub<i>(\mathsf s)$, which in turn remains equal to $\gub<i>(\mathsf s,\mathsf a)=\gub<i>(\mathsf t)$.
This cyclic dependency lets both $\mathsf s$ and $\mathsf t$ remain in an ``illusion'' that the value of the other one is $1$.

The solution for MDP is to remove this cyclic dependency by collapsing all MECs into singletons and removing the resulting purely self-looping actions.
Figure \ref{lBCEC} in the middle shows the MDP after collapsing the EC $\set{\mathsf{s},\mathsf{t}}$. 
This turns the MDP into a stopping one, where $\target$ or $\sink$ is under any strategy reached with probability 1.
In such MDP, there is a unique solution to the Bellman equations.
Therefore, the greatest fixpoint is equal to the least one and thus to $\val$.

\noindent
\begin{figure}[t]
	\myspace\myspace
	
	\begin{minipage}[t]{0.36 \textwidth}

\begin{tikzpicture}[scale=0.9]
\draw[white,fill=black!5, rounded corners=10pt] (-0.5,0.8) rectangle (2,-0.8);

\drawdummy (init) at (-1,0) {};
\drawbox (p) at (0,0) {$\mathsf{s}$};
\drawbox (q) at (1.5,0) {$\mathsf{t}$};
\drawdummy (mid) at (2.3,0) {};
\drawbox (1) at (3.5,1) {$\target$ };
\drawbox (0) at (3.5,-1)  {$\bot$};

\draw[->] (init) to (p);
\draw[->]  (p) to[bend left] node [midway,anchor=south] {$\mathsf{a}$}(q) ;
\draw[->]  (q) to [bend left] node [midway,anchor=north] {$\mathsf{b}$} (p);
\draw[-] (q) to node [midway,anchor=north] {$\mathsf{c}$} (mid) ;
\draw (mid) -- node [anchor=north,pos=0.6] {$\frac{1}{3}$} (0);
\draw (mid) -- node [anchor=south,pos=0.6] {$\frac{1}{3}$} (1);
\draw (mid) to [bend left,looseness=2,out=270,in=270] node [anchor=south,pos=0.5] {$\frac{1}{3}$} (q) ;
\draw[->]  (0) to[loop above]  node [midway,anchor=south] {$\mathsf{e}$} (0);
\draw[->]  (1) to [loop above] node [midway,anchor=south] {$\mathsf{d}$} (1) ;
\end{tikzpicture}
	\end{minipage}
	\begin{minipage}[t]{0.36\textwidth}

\begin{tikzpicture}[scale=0.9]
\drawdummy (init) at (-1,0) {};
\node[draw,rectangle,minimum height=0.7cm, minimum width=1.4cm] (pq) at (0.5,0) {$\mathsf{ \{s,t\}}$};
\drawdummy (mid) at (1.8,0) {};
\drawbox (1) at (3,1) {$\target$ };
\drawbox (0) at (3,-1)  {$\bot$};

\draw[->] (init) to (pq);
\draw[-] (pq) to node [midway,anchor=north] {$\mathsf{c}$} (mid) ;
\draw (mid) -- node [anchor=north,pos=0.6] {$\frac{1}{3}$} (0);
\draw (mid) -- node [anchor=south,pos=0.6] {$\frac{1}{3}$} (1);
\draw (mid) to [bend left,looseness=2,out=270,in=270] node [anchor=south,pos=0.5] {$\frac{1}{3}$}($(pq.north east)!0.5!(pq.north)$);
\draw[->]  (0) to[loop above]  node [midway,anchor=south] {$\mathsf{e}$} (0);
\draw[->]  (1) to [loop above] node [midway,anchor=south] {$\mathsf{d}$} (1) ;

\end{tikzpicture}
	\end{minipage}
	\begin{minipage}[t]{0.25 \textwidth}  
		\vspace*{-9em}
		\large
		\begin{center}
			\normalsize
			\begin{tabular}{|c||c|c|c|c|} \hline
				$i$  & $\lb<i>(\mathsf{\{s,t\}})$ &  $\gub<i>(\mathsf{\{s,t\}})$ \\ \hline \hline
				0 		&	0			&		1			\\ \hline
				1 		&	$\frac{1}{3}$	&		$\frac{2}{3}$ 	\\ \hline
				2 		&	$\frac{4}{9}$	&		$\frac{5}{9}$	\\ \hline
				3  		&	$\frac{13}{27}$	&		$\frac{14}{27}$	 \\ \hline
			\end{tabular}
		\end{center}
	\end{minipage}
\myspace
	\caption{Left: An MDP (as special case of SG) where BVI does not converge due to the
          grayed EC. Middle: The same MDP where the EC is collapsed,
          making BVI converge. Right: The approximations illustrating
          the convergence of the MDP in the middle.}
	\label{lBCEC}
\end{figure}
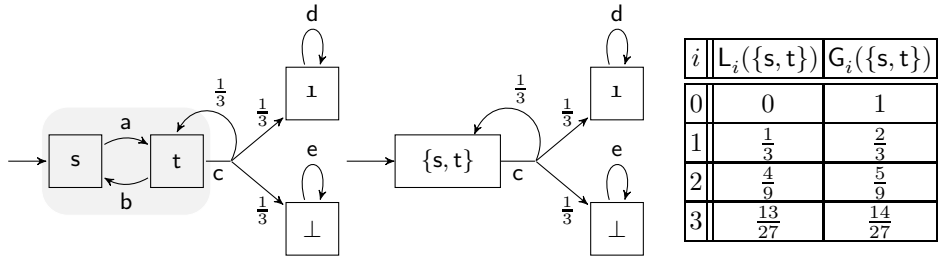




Secondly, we illustrate the issues that additionally arise for general \SG.
It turns out that the collapsing approach can be extended only to games where all states of each EC belong to one player only \cite{MatPhD}.
In this case, both Maximizer's and Minimizer's ECs are collapsed the same way as in MDP. 



However, when both players are present in an EC, then collapsing may not solve the issue.
Consider the SG of Figure \ref{cBCEC}.
Here $\alpha$ and $\beta$ represent the values of the respective actions.\footnote{Precisely, we consider them to stand for a probabilistic branching with probability $\alpha$ (or $\beta$) to $\target$ and with the remaining probability to $\sink$. 
To avoid clutter in the figure, we omit this branching and depict only the value.}
There are three cases:

First, let $\mathsf{\alpha}<\mathsf{\beta}$.
If the bounds converge to these values we eventually observe $\gub<i>(q,e)<\lb<i>(r,f)$ and learn the induced inequality.
Since $\mathsf{p}$ is a Minimizer's state it will never pick the action leading to the greater value of $\mathsf{\beta}$.
Therefore, we can safely merge $\mathsf{p}$ and $\mathsf{q}$, and remove the action leading to $\mathsf{r}$, as shown in the second subfigure.

Second, if $\mathsf{\alpha}>\mathsf{\beta}$, $\mathsf{p}$ and $\mathsf{r}$ can be merged in an analogous way, as shown in the third subfigure.

Third, if $\mathsf{\alpha}=\mathsf{\beta}$, both previous solutions as well as collapsing all three states as in the fourth subfigure is possible.
However, since the approximants may only converge to $\mathsf{\alpha}$ and $\mathsf{\beta}$ in the limit, we may not know in finite time which of these cases applies and thus cannot decide for any of the collapses.

\begin{figure}[htbp]

\noindent
\begin{minipage}[t]{0.24 \textwidth}

\begin{tikzpicture}[scale=0.6]

\drawcirc (A) at (0,0) {$\mathsf{p}$};
\drawbox (B) at (2,2) {$\mathsf{q}$};
\drawbox (C) at (2,-2) {$\mathsf{r}$};

\node (Bo) at (4,2) {$\alpha$};
\node (Co) at (4,-2) {$\beta$};

\draw[->] (A) to[bend left]  node [midway,anchor=south] {$\mathsf{a}$} (B);
\draw[->] (B) to[bend left]  node [midway,anchor=north] {$\mathsf{b}$} (A);
\draw[->] (A) to[bend left]  node [midway,anchor=south, pos=0.7] {$\mathsf{c}$} (C);
\draw[->] (C) to[bend left]  node [midway,anchor=north] {$\mathsf{d}$} (A);

\draw[->] (B) to  node [midway,anchor=south] {$\mathsf{e}$} (Bo);
\draw[->] (C) to  node [midway,anchor=south] {$\mathsf{f}$} (Co);

\end{tikzpicture}

\end{minipage}
\begin{minipage}[t]{0.24 \textwidth}

\begin{tikzpicture}[scale=0.6]

\node[draw,rectangle,minimum size=1.2cm] (AB) at (1,1.3) {$\mathsf{pq}$};

\drawbox (C) at (2,-2) {$\mathsf{r}$};

\node (Bo) at (4,2) {$\alpha$};
\node (Co) at (4,-2) {$\beta$};

\draw[->] (C) to[bend left]  node [midway,anchor=north east, pos=0.3] {$\mathsf{d}$} (AB);

\draw[->] (AB.35) to  node [midway,anchor=south] {$\mathsf{e}$} (Bo);
\draw[->] (C) to  node [midway,anchor=south] {$\mathsf{f}$} (Co);

\end{tikzpicture}

\end{minipage}
\begin{minipage}[t]{0.24 \textwidth}

\begin{tikzpicture}[scale=0.6]

\node[draw,rectangle,minimum size=1.2cm] (AC) at (1,-1.3) {$\mathsf{pr}$};
\drawbox (B) at (2,2) {$\mathsf{q}$};

\node (Bo) at (4,2) {$\alpha$};
\node (Co) at (4,-2) {$\beta$};

\draw[->] (B) to[bend left]  node [midway,anchor=west] {$\mathsf{b}$} (AC);

\draw[->] (B) to  node [midway,anchor=south] {$\mathsf{e}$} (Bo);
\draw[->] (AC.-35) to  node [midway,anchor=south] {$\mathsf{f}$} (Co);

\end{tikzpicture}

\end{minipage}
\begin{minipage}[t]{0.24 \textwidth}

\begin{tikzpicture}[scale=0.6]

\node[draw,rectangle,minimum size=1.2cm] (ABC) at (1,0) {$\mathsf{pqr}$};

\node (Bo) at (4,2) {$\alpha$};
\node (Co) at (4,-2) {$\beta$};

\draw[->] (ABC) to  node [midway,anchor=south] {$\mathsf{e}$} (Bo);
\draw[->] (ABC) to node [midway,anchor=south] {$\mathsf{f}$} (Co);

\end{tikzpicture}

\end{minipage}
\caption{Left: 
	Collapsing ECs in SG may lead to incorrect results.
	The Greek letters on the leaving arrows denote the values of the \leaving\ actions. Right three figures: Correct collapsing in different cases, depending on the relationship of $\alpha$ and $\beta$.
	In contrast to MDP, some actions of the EC \leaving\ the collapsed part have to be removed. 
	}
\label{cBCEC}
\myspace\myspaceb
\end{figure}
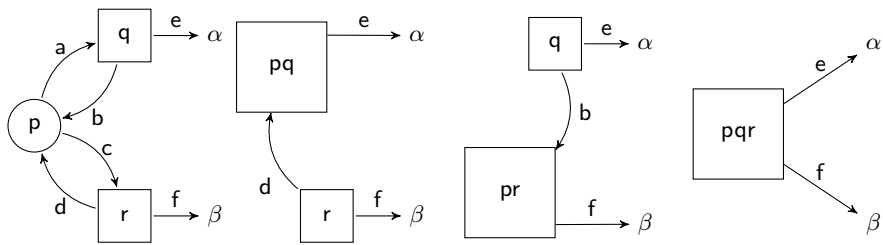

Consequently, the approach of collapsing is not applicable in general.
In order to ensure BVI convergence, we suggest a different method, which we call \emph{deflating}. 
It does not involve changing the state space, but rather decreasing the upper bound $\ub<i>$ to the least value that is currently provable (and thus still correct).
To this end, we analyze the \leaving\ actions, i.e. with successors outside of the EC, for the following reason. 
If the play stays in the EC forever, the target is never reached and Minimizer wins.
Therefore, Maximizer needs to pick some \leaving\ action to avoid staying in the EC.

For the EC with the states $\mathsf{s}$ and $\mathsf{t}$  in Figure \ref{lBCEC}, the only \leaving\ action is $\mathsf{c}$. In this example, since $\mathsf{c}$ is the only \leaving\ action, $\ub<i>(\mathsf{t},\mathsf{c})$ is the highest possible upper bound that the EC can achieve. Thus, by decreasing the upper bound of all states in the EC to that number\footnote{We choose the name ``deflating'' to evoke decreasing the overly high ``pressure'' in the EC until it equalizes with the actual ``pressure'' outside.}, we still have a safe upper bound.
Moreover, with this modification BVI converges in this example, intuitively because now the upper bound of $\mathsf{t}$ depends on action $\mathsf{c}$ as it should. 

For the example in Figure \ref{cBCEC}, it is correct to decrease the upper bound to the maximal \leaving\ one, i.e. 
$\max \{\hat{\alpha},\hat{\beta}\}$, where $\hat\alpha:=\ub<i>(\mathsf{a}),\hat{\beta}:=\ub<i>(\mathsf{b})$ are the current approximations of $\alpha$ and of $\beta$.
However, this itself does not ensure BVI convergence.
Indeed, if for instance $\hat{\alpha} < \hat{\beta}$ then deflating all states to $\hat\beta$ is not tight enough, as values of $\mathsf{p}$ and $\mathsf{q}$ can even be bounded by  $\hat{\alpha}$. 
In fact, we have to find a certain sub-EC that corresponds to $\hat{\alpha}$, in this case $\{\mathsf{p,q}\}$ and set all its upper bounds to $\hat{\alpha}$.
We define and compute these sub-ECs in the next section.

In summary, the general structure of our convergent BVI algorithm is to produce the sequence $\ub$ by application of Bellman updates and occasionally find the relevant sub-ECs and deflate them. 
The main technical challenge is that states in an EC in SG can have different values, in contrast to the case of MDP.

\vspace{-1.2em}

\section{Convergent Over-approximation} \label{sec:main}

\vspace{-0.8em}

In Section \ref{sec:char}, we characterize \SG s where Bellman equations have more solutions.
Based on the analysis, subsequent sections show how to alter the procedure computing the sequence $\gub<i>$ over-approximating $\val$ so that the resulting tighter sequence $\ub<i>$ still over-approximates $\val$, but also converges to $\val$.
This ensures that thus modified BVI converges.
Section~\ref{sec:learn} presents the learning-based variant of our BVI.
\vspace{-1.3em}

\subsection{Bloated end components cause non-convergence}\label{sec:char}
\vspace{-0.5em}

As we have seen in the example of Fig.~\ref{cBCEC}, BVI generally does not converge due to ECs with a particular structure of the \leaving\ actions.
The analysis of ECs relies on the extremal values that can be achieved by \leaving\ actions (in the example, $\alpha$ and $\beta$).
Given the value function $\val$ or just its current over-approximation $\ub<i>$, we define the most profitable \leaving\ action for Maximizer (denoted by $\Box$) and Minimizer (denoted by $\circ$) as follows.
\begin{definition}[$\exit$]\label{def:exit}
Given a set of states $T \subseteq \states$ and a function $f:\states\to[0,1]$ (see footnote \ref{fn:overload}),
the $f$-value of the best $T$-\leaving\ action of Maximizer and Minimizer, respectively, is defined as
\vspace{-0.5em}
\begin{align*}
	\exit<f>[\Box](T) &= \max_{\substack{\state \in T_\Box\\ (\state,\action) \leaves T}} f(\state,\action)\\
	\exit<f>[\circ](T) &= \min_{\substack{\state \in T_\circ\\ (\state,\action) \leaves T}} f(\state,\action)
\end{align*}
with the convention that $\max_\emptyset = 0$ and $\min_\emptyset = 1$.
\end{definition}

\begin{example}
	In the example of Fig.~\ref{cBCEC} on the left with $T=\set{\mathsf p, \mathsf q,\mathsf r}$ and $\alpha<\beta$, we have $	\exit<\val>[\Box](T)=\beta$, 	$\exit<\val>[\circ](T)=1$.
	It is due to $\beta<1$ that BVI does not converge here.
	We generalize this in the following lemma.
	\qee
\end{example}


\begin{lemma}\label{lem:illu} 
Let $T$ be an EC.
For every $m$ satisfying
$\exit<\val>[\Box](T) \leq m \leq \exit<\val>[\circ](T)$,
there is a solution $f\colon\states\to[0,1]$ to the Bellman equations, which on $T$ is constant and equal to $m$.
\end{lemma}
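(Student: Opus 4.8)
The plan is to exhibit the solution explicitly by taking $\val$ and overwriting it on $T$ with the constant $m$. Concretely, I define $f\colon\states\to[0,1]$ by $f(s)=m$ for $s\in T$ and $f(s)=\val(s)$ for $s\notin T$, and then check that $f$ satisfies the Bellman equations; since $\val$ is the \emph{least} solution, being a solution is the same as being a fixpoint of the Bellman operator, which is exactly what the statement asks for. Note that $\target,\sink\notin T$ (they admit only self-loops and hence lie in no non-trivial EC), so the boundary values $1$ and $0$ are inherited correctly from $\val$.

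The heart of the argument is the verification at states $s\in T$. First I would observe that every state of $T$ possesses a \emph{staying} action: taking the witnessing action set $B$ of the EC, condition~2 forces each $s\in T$ to have some $a\in B\cap\Av(s)$, and by condition~1 this action does not leave $T$, i.e.\ $\post(s,a)\subseteq T$. For such an action all successors carry $f$-value $m$, so $f(s,a)=m$. Consequently the Bellman update at a Maximizer state is $\ge m$ and at a Minimizer state is $\le m$. To pin both to exactly $m$ I would bound the remaining, \emph{leaving}, actions using the hypotheses: a leaving action of Maximizer whose successors all lie outside $T$ has $f(s,a)=\val(s,a)\le\exit<\val>[\Box](T)\le m$, and a leaving action of Minimizer likewise contributes at least $\exit<\val>[\circ](T)\ge m$. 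Hence the $\max$ (resp.\ $\min$) over all available actions equals $m$, which is precisely $f(s)$.

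For $s\notin T$ the value $f(s)=\val(s)$ already solves the corresponding equation, so the only thing left is that replacing the values on $T$ by $m$ does not disturb the update at $s$. The hard part will be exactly the actions that cross the boundary of $T$, where $f$ and $\val$ disagree on part of the successor distribution: namely (i) leaving actions of $T$ that re-enter $T$, for which $f(s,a)$ blends the constant $m$ with exterior $\val$-values and therefore need not coincide with $\val(s,a)$, so the clean bound $f(s,a)\le\exit<\val>[\Box](T)$ used above is not immediate; and (ii) actions of exterior states pointing into $T$. This is where the feature distinguishing SGs from MDPs bites — states inside an EC may carry genuinely different values — so I cannot simply equate interior $f$-values with interior $\val$-values.

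I expect to discharge this obstacle in one of two ways. The cleaner route is to avoid fixing the exterior to $\val$ at all: first solve the Bellman equations on $\states\setminus T$ with $T$ treated as an absorbing region of value $m$ (every edge entering $T$ yielding $m$), take that solution — which exists as a fixpoint of a monotone operator — as $f$ outside $T$, and set $f\equiv m$ on $T$. Then the exterior equations, including case~(ii), hold by construction, and only the in-$T$ verification remains, reducing everything to the staying/leaving dichotomy above. The residual point is then to control the re-entering leaving actions of case~(i); for this I would invoke the structural fact that, under $\exit<\val>[\Box](T)\le m\le\exit<\val>[\circ](T)$, neither player profits from a boundary-crossing action — intuitively, if Minimizer commits to staying inside $T$ then the play either never reaches $\target$ or leaves through a Maximizer exit, capping the attainable value by $\exit<\val>[\Box](T)\le m$, and symmetrically for Minimizer — which is what forbids the values inside $T$ from drifting outside $[\exit<\val>[\Box](T),\exit<\val>[\circ](T)]$ and makes the constant $m$ self-consistent.
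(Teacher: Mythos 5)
Your construction and your staying/\leaving\ case split are exactly the paper's: set $f\equiv m$ on $T$, observe that every state of $T$ has a staying action of $f$-value exactly $m$, and try to bound the \leaving\ actions by the two $\exit$ quantities. You have also correctly isolated the one step that does not follow, namely your case~(i): a Maximizer \leaving\ action $(s,a)$ with $\post(s,a)\cap T\neq\emptyset$. Unfortunately neither of your proposed repairs closes this gap. Recomputing $f$ outside $T$ with $T$ absorbing at value $m$ only handles your case~(ii), and the ``structural fact'' you invoke for case~(i) is a statement about the \emph{value} of the game, which does not transfer to the artificial function $f$ that inflates interior states to $m$. Concretely, take $T=\{p,q,r\}$ where $p$ is a Minimizer state with staying actions to $q$ and to $r$ and a \leaving\ action of value $0.6$; $q$ is a Maximizer state with a staying action to $p$ and a \leaving\ action $e$ with $\trans(q,e)(r)=\trans(q,e)(\target)=\tfrac12$; and $r$ is a Maximizer state whose only action returns to $p$. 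Then $\val(p)=\val(r)=0$ (Minimizer loops through $r$ forever) and $\val(q)=\tfrac12$, so $\exit<\val>[\Box](T)=\val(q,e)=\tfrac12\leq 0.6=\exit<\val>[\circ](T)$ and the hypothesis holds for every $m\in[\tfrac12,0.6]$. But any solution of the Bellman equations must satisfy $f(\target)=1$, hence $f(q,e)=\tfrac12 m+\tfrac12>m$ and $f(q)=\max(f(p),f(q,e))>m$ for $m<1$, no matter how $f$ is chosen outside $T$. The bound $\val(s,a)\leq\exit<\val>[\Box](T)\leq m$ simply does not yield $f(s,a)\leq m$ once the interior successors, whose true values may lie far below $m$, are raised to $m$.

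For what it is worth, the paper's own proof makes precisely the leap you were suspicious of: in its chain of inequalities it asserts $m\geq\exit<\ub>[\Box](T)$ ``by assumption'', silently replacing the hypothesis about $\exit<\val>[\Box](T)$ by the same quantity evaluated on the candidate function $\ub$. That replacement is valid only when every \leaving\ action has \emph{all} of its successors outside $T$ (as in all of the paper's examples, where exits branch directly to $\target$ and $\sink$), and under that extra assumption your argument for the purely-\leaving\ actions already completes the proof. So your diagnosis of where the difficulty lies is exactly right, but the proposal as written does not discharge it, and the example above shows it cannot be discharged without strengthening the hypothesis.
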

\begin{proof}[Idea]
Intuitively, such a constant $m$ is a solution to the Bellman equations on $T$ for the following reasons.
As both players prefer getting $m$ to \leaving\ and getting ``only'' the values of their respective $\exit$,
they both choose to stay in the EC (and the extrema in the Bellman equations are realized on non-\leaving\ actions). 
On the one hand, Maximizer (Bellman equations with $\max$) is hoping for the promised $m$, which is however not backed up by any actions actually \leaving\ towards the target.
On the other hand, Minimizer (Bellman equations with $\min$) does not realize that staying forever results in her optimal value $0$ instead of $m$.
\qed
\end{proof}

\begin{corollary}
	If 	$\exit<\val>[\circ](T) > \exit<\val>[\Box](T)$ for some EC~$T$,
	then $\gub[*]\neq\val$.
\end{corollary}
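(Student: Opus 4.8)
The plan is to derive the statement directly from Lemma~\ref{lem:illu}, using the two facts already recorded in the preliminaries: that $\val$ is the \emph{least} solution of the Bellman equations, while $\gub[*]$ is the \emph{greatest} solution on $[0,1]^\states$. The idea is to exhibit two solutions that trap $\val$ and $\gub[*]$ on opposite sides of the gap $[\exit<\val>[\Box](T),\exit<\val>[\circ](T)]$, whose positive width is exactly the hypothesis.

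Concretely, I would instantiate Lemma~\ref{lem:illu} twice for the given EC $T$. Taking $m=\exit<\val>[\circ](T)$ (admissible as the right endpoint of the interval, since $\exit<\val>[\Box](T)\le\exit<\val>[\circ](T)$) produces a solution $f\colon\states\to[0,1]$ that is constant and equal to $\exit<\val>[\circ](T)$ on $T$. Taking $m=\exit<\val>[\Box](T)$ (the left endpoint) produces a solution $g$ that is constant and equal to $\exit<\val>[\Box](T)$ on $T$. Now I invoke extremality in the pointwise order: because $\gub[*]$ is the greatest solution, $f\le\gub[*]$ everywhere, so $\gub[*](\state)\ge\exit<\val>[\circ](T)$ for all $\state\in T$; because $\val$ is the least solution, $\val\le g$ everywhere, so $\val(\state)\le\exit<\val>[\Box](T)$ for all $\state\in T$. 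Combining these with the hypothesis $\exit<\val>[\Box](T)<\exit<\val>[\circ](T)$, for any $\state\in T$ (which exists, as an EC is non-empty) we obtain
\[
  \val(\state)\le\exit<\val>[\Box](T)<\exit<\val>[\circ](T)\le\gub[*](\state),
\]
whence $\val(\state)<\gub[*](\state)$, and in particular $\gub[*]\neq\val$.

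I expect no genuine obstacle here, as all the substance is concentrated in Lemma~\ref{lem:illu}; the corollary is a two-line deduction. The only points deserving care are checking that both chosen values of $m$ lie in the admissible range of the lemma (they are precisely its endpoints, so this reduces to $\exit<\val>[\Box](T)\le\exit<\val>[\circ](T)$, which the strict hypothesis certainly implies) and citing the correct extremal characterization for each fixpoint. Note that a one-sided argument using only $f$ would also close the proof, provided one separately established $\val\le\exit<\val>[\Box](T)$ on $T$; routing the lower estimate through the second solution $g$ is cleaner because it avoids re-deriving that bound by hand.
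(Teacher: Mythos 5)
Your proof is correct and follows essentially the same route as the paper: both arguments apply Lemma~\ref{lem:illu} together with the facts that $\val$ is the least and $\gub[*]$ the greatest solution of the Bellman equations (the paper picks two interior points $m_1<m_2$ to exhibit two distinct solutions, while you use the two endpoints to sandwich $\val$ and $\gub[*]$ on opposite sides of the gap). Your variant is, if anything, slightly more informative, since it yields the explicit bound $\gub[*](\state)-\val(\state)\geq \exit<\val>[\circ](T)-\exit<\val>[\Box](T)$ on $T$ rather than mere non-uniqueness.
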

\begin{proof}
	Since there are $m_1,m_2$ such that $\exit<\val>[\Box](T) < m_1 < m_2 < \exit<\val>[\circ](T)$, by Lemma~\ref{lem:illu} there are two different solutions to the Bellman equations.
	In particular, $\gub[*]>\lb[*]=\val$, and BVI does not converge.
	\qed
\end{proof}


In accordance with our intuition that ECs satisfying the above inequality should be deflated, we call them bloated.
\begin{definition}[$\CEC$]
	An EC $T$ is called a \emph{bloated end component ($\CEC$)}, if $\exit<\val>[\circ](T) > \exit<\val>[\Box](T)$.
\end{definition}

\begin{example}
	In the example of Fig.~\ref{cBCEC} on the left with $\alpha<\beta$, 
	the ECs $\mathsf{\{p,q\}}$ and $\mathsf{\{p,q,r\}}$ are $\CEC$s. \qee
\end{example}

\begin{example}
If an EC $T$ has no \leaving\ actions of Minimizer (or no Minimizer's states at all, as in an MDP), then $\exit<\val>[\circ](T)=1$ (the case with $\min_\emptyset$).
Hence all numbers between $\exit<\val>[\Box](T)$ and 1 are a solution to the Bellman equations and $\gub[*](s)=1$ for all states $s\in T$.

Analogously, if Maximizer does not have any \leaving\ action in $T$
, then $\exit<\val>[\Box](T)=0$ (the case with $\max_\emptyset$), it is a $\CEC$ and all numbers between 0 and $\exit<\val>[\circ](T)$ are a solution to the Bellman equations. 

Note that in MDP all ECs belong to one player, namely Maximizer.
Consequently, all ECs are $\CEC$s except for ECs where Maximizer has an \leaving\ action with value $1$;
all other ECs thus have to be collapsed (or deflated) to ensure BVI convergence in MDPs.
Interestingly, all non-trivial ECs in MDPs are a problem, while in SGs through the presence of the other player some ECs can converge, namely if both players want to exit. Such an EC is depicted in Appendix \ref{app:convEC}
\qee
\end{example}

We show that $\CEC$s are indeed the only obstacle for BVI convergence.
\begin{theorem}
	\label{lem:convBVIwo}
	If the $\SG$ contains no $\CEC$s except for $\{\mathfrak 0\}$ and $\{\mathfrak 1\}$, then
	$\gub[*]=\val$.
\end{theorem}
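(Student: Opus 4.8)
The plan is to argue by contradiction, exploiting that $\val=\lb[*]$ is the \emph{least} and $\gub[*]$ the \emph{greatest} solution of the Bellman equations, so that $\gub[*]\geq\val$ pointwise; I would show that any gap forces a non-trivial $\CEC$, contradicting the hypothesis. Concretely, suppose $\gub[*]\neq\val$ and set $\Delta \eqdef \max_{\state\in\states}\bigl(\gub[*](\state)-\val(\state)\bigr) > 0$ together with the set $X \eqdef \set{\state\in\states \mid \gub[*](\state)-\val(\state) = \Delta}$ of states attaining the maximal gap. Since $\gub[*]$ and $\val$ agree on the sinks, $\gub[*](\target)=\val(\target)=1$ and $\gub[*](\sink)=\val(\sink)=0$, we have $\target,\sink\notin X$, so any EC found inside $X$ is automatically different from $\set{\target}$ and $\set{\sink}$.

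First I would prove a forcing lemma: every state of $X$ has an action whose successors all lie in $X$. For $\state\in X_\Box$ pick a $\gub[*]$-optimal action $\action$, so $\gub[*](\state)=\gub[*](\state,\action)$ while $\val(\state)\geq\val(\state,\action)$; then $\Delta \leq \gub[*](\state,\action)-\val(\state,\action) = \sum_{s'}\trans(\state,\action,s')\,\bigl(\gub[*](s')-\val(s')\bigr) \leq \Delta$, and since every summand is at most $\Delta$, equality forces $\gub[*](s')-\val(s')=\Delta$, i.e.\ $s'\in X$, for each successor $s'$. Symmetrically, for $\state\in X_\circ$ every $\val$-optimal action $\action$ satisfies $\gub[*](\state)\leq\gub[*](\state,\action)$ and $\val(\state)=\val(\state,\action)$, and the same chain forces its successors into $X$. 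A by-product of the Maximizer computation is that such a $\gub[*]$-optimal action is in fact \emph{$\val$-optimal}: $\val(\state,\action)=\gub[*](\state,\action)-\Delta=\gub[*](\state)-\Delta=\val(\state)$.

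Next, let $M$ be the sub-MDP on $X$ in which each Maximizer state offers all its $\gub[*]$-optimal actions and each Minimizer state all its $\val$-optimal actions; by the forcing lemma these stay in $X$ and are non-empty. I would take $T$ to be a bottom MEC of $M$ (a leaf of its MEC-quotient DAG). A standard argument then shows that every designated action from $T$ stays in $T$, since otherwise following designated actions would reach another MEC (contradicting bottom-ness) or return to $T$ through outside states (contradicting maximality of $T$). Hence all $\val$-optimal Minimizer actions and all $\gub[*]$-optimal Maximizer actions of states in $T$ stay inside $T$, and by the by-product above \emph{every} designated action is $\val$-harmonic. Choosing within the connecting action set of the EC a memoryless selection that makes $T$ a single recurrent class, $\val$ is a bounded harmonic function on a finite irreducible Markov chain, hence constant on $T$, say $\val\equiv v$.

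It then remains to read off the $\CEC$ inequality. Every $T$-leaving Maximizer action $\action$ from $\state\in T_\Box$ has $\val(\state,\action)\leq\val(\state)=v$, so $\exit<\val>[\Box](T)\leq v$; every $T$-leaving Minimizer action is $\val$-suboptimal (all $\val$-optimal ones stay in $T$), hence $\val(\state,\action) > \val(\state)=v$, giving $\exit<\val>[\circ](T) > v$ (the empty case $\min_\emptyset=1$ is covered by $v<1$ on $T\subseteq X$). Thus $\exit<\val>[\circ](T) > \exit<\val>[\Box](T)$, so $T$ is a non-trivial $\CEC$, contradicting the hypothesis; therefore $\gub[*]=\val$. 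I expect the main obstacle to be precisely the middle step: selecting $T$ as a bottom MEC so that leaving actions are genuinely suboptimal, and above all establishing that $\val$ is constant on $T$. The delicate ingredient is that Maximizer's locally $\gub[*]$-optimal staying actions are simultaneously $\val$-optimal, which is what renders the whole EC $\val$-harmonic and lets the Minimizer-exit bound strictly dominate the Maximizer-exit bound; the averaging and fixpoint manipulations surrounding it are routine.
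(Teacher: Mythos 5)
Your proof is correct and shares the paper's overall strategy: assume $\gub[*]\neq\val$, take the set $X$ of states where the difference $\gub[*]-\val$ is maximal (which excludes $\target$ and $\sink$), and exhibit a non-trivial $\CEC$ inside $X$, contradicting the hypothesis. Your forcing lemma is the same averaging argument the paper uses in contrapositive form (an optimal action that exits $X$ would have an action-difference strictly below the maximum). The genuine divergence is in the middle. The paper splits on whether $X$ is an EC and, in the hard case, passes to the subset $Y\subseteq X$ of states with maximal upper bound $m$; any EC $Z\subseteq Y$ then has both $\gub[*]$ and the difference constant, so $\val\equiv m-d$ on $Z$ comes for free, and $Z$ is shown to be a $\CEC$. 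You instead form the sub-MDP of all $\gub[*]$-optimal Maximizer and all $\val$-optimal Minimizer actions (well defined inside $X$ by the forcing lemma), take a bottom MEC $T$, and derive constancy of $\val$ on $T$ from the fact that every designated action is $\val$-harmonic (your ``by-product'') together with the maximum principle on a strongly connected chain. Both routes are sound; yours avoids the paper's case distinction and the auxiliary set $Y$ at the price of the harmonicity argument. Two details worth tightening: the memoryless selection making $T$ a single recurrent class must in general be randomized, since a deterministic choice of one designated action per state need not keep $T$ strongly connected --- this is harmless because the paper admits randomized strategies, or one can apply the maximum principle directly via strong connectivity of the designated-action graph; and the existence of a bottom MEC from which \emph{no} designated action leaves is obtained most cleanly as a bottom SCC of that graph restricted to $X$, which is what your ``leaf of the MEC-quotient DAG'' is gesturing at. Neither point is a real gap.
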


\begin{proof}[Sketch]
	Assume, towards a contradiction, that there is some state $\state$ with a positive difference $\gub[*](\state)-\val(\state)>0$.
	Consider the set $D$ of states with the maximal difference. 
	$D$ can be shown to be an EC.
	Since it is not a $\CEC$ there has to be an action \leaving\ $D$ and realizing the optimum in that state.
	Consequently, this action also has the maximal difference, and all its successors, too.
	Since some of the successors are outside of $D$, we get a contradiction with the maximality of $D$.
	\qed
\end{proof}

\vspace{-0.5em}
In Section \ref{sec:static}, we show how to eliminate $\CEC$s by collapsing their ``core'' parts, called below MSECs (maximal simple end components). 
Since MSECs can only be identified with enough information about $\val$, Section~\ref{sec:dynamic} shows how to avoid direct {\em a priori} collapsing and instead dynamically deflate candidates for MSECs in a conservative way.

\vspace{-1.3em}

\subsection{Static MSEC decomposition}\label{sec:static}

\vspace{-0.8em}

Now we turn our attention to SG with $\CEC$s.
Intuitively, since in a $\CEC$ all Minimizer's \leaving\ actions have a higher value than what Maximizer can achieve, Minimizer does not want to use any of his own \leaving\ actions and prefers staying in the EC (or steering Maximizer towards his worse \leaving\ actions).
Consequently, only Maximizer wants to take an \leaving\ action.
In the MDP case he can pick any desirable one.
Indeed, he can wait until he reaches a state where it is available.
As a result, in MDP all states of an EC have the \emph{same value} and can all be collapsed into one state.
In the SG case, he may be restricted by Minimizer's behaviour or even not given any chance to exit the EC at all.
As a result, a $\CEC$ may contain several parts (below denoted MSECs), each with different value, intuitively corresponding to different exits. 
Thus instead of MECs, we have to decompose into finer MSECs and only collapse these.
\myspaceb

\begin{definition}[$\Simple~EC$]\label{def:simBCEC}
	An EC~$T$ is called \emph{\simple\ (SEC)}, if for all $\state \in T$ we have $\val(\state)=\exit<\val>[\Box](T)$.

	A SEC $C$ is \emph{maximal (MSEC)} if there is no SEC $C'$ such that $C\subsetneq C'$. 
\end{definition}
\myspaceb

Intuitively, an EC is \simple, if Minimizer cannot keep Maximizer away from his $\exit$.
Independently of Minimizer's decisions, Maximizer can reach the $\exit$ almost surely, unless Minimizer decides to leave, in which case Maximizer could achieve an even higher value.
\myspaceb

\begin{example}
	Assume $\alpha<\beta$ in the example of Figure \ref{cBCEC}. 
	Then $\set{\mathsf p, \mathsf q}$ is a SEC and an MSEC.
	Further observe that action $\mathsf c$ is sub-optimal for Minimizer and removing it does not affect the value of any state, but simplifies the graph structure.
	Namely, it destructs the whole EC into several (here only one) SECs and some non-EC states (here $\mathsf r$).
	\qee
\end{example}
\myspaceb

Algorithm~\ref{alg:FIND}, called $\FIND$, shows how to compute MSECs.
It returns the set of all MSECs if called with parameter $\val$.
However, later we also call this function with other parameters $f:\states\to[0,1]$.
The idea of the algorithm is the following.
The set $X$ consists of Minimizer's sub-optimal actions, leading to a higher value.
As such they cannot be a part of any SEC and thus should be ignored  when identifying SECs.
(The previous example illustrates that ignoring $X$ is indeed safe as it does not change the value of the game.)
We denote the game $\G$ where the available actions $\Av$ are changed to the new available actions $\Av'$ (ignoring the Minimizer's sub-optimal ones) as $\G<[\Av / \Av']>$.
Once removed, Minimizer has no choices to affect the value and thus each EC is simple.

\vspace{-1.8em}

\begin{algorithm}[H]

\caption{$\FIND$}\label{alg:FIND}
\begin{algorithmic}[1]
\Function{$\FIND$}{$f:\states\to[0,1]$}
\State $X \gets \{(\state,\{\action\in\Av(\state) \mid f(\state,\action) > f(\state)\}) \mid \state \in \states<\circ>\}$\label{find:remove}
\State 
$\Av'\gets\Av\setminus X$ 
\hspace*{14mm}\Comment{Minimizer's $f$-suboptimal actions removed} \label{line:findRemove}
\State \textbf{return} $\MECs(\G<[\Av / \Av']>)$
\hspace*{2mm}\Comment{$\MECs(\G<[\Av / \Av']>)$ are MSECs of the original $\G$}
\EndFunction
\end{algorithmic}

\end{algorithm}

\vspace{-1.8em}

%

\begin{lemma}[Correctness of Algorithm \ref{alg:FIND}]\label{lem:find}
	  $T \in$ \emph{\FIND}$(\val)$ if and only if $T$ is a MSEC.
\end{lemma}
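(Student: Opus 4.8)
The plan is to read $\FIND(\val)$ as computing the maximal end components of the restricted game $\G<[\Av / \Av']>$ in which every Minimizer state keeps exactly its $\val$-optimal actions. Indeed $X$ removes the actions $\action$ with $\val(\state,\action) > \val(\state)$, and since $\val(\state) = \min_{\action}\val(\state,\action)$ for $\state \in \states<\circ>$, the surviving actions $\action \in \Av'(\state)$ are precisely those with $\val(\state,\action) = \val(\state)$. As the MEC decomposition depends only on the action-availability graph and $\Av' \subseteq \Av$, every EC of $\G<[\Av / \Av']>$ is also an EC of $\G$. I then reduce the whole statement to two facts: (I) every EC of $\G<[\Av / \Av']>$ is a SEC of $\G$, and (II) every SEC of $\G$ is an EC of $\G<[\Av / \Av']>$. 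Granting these, a MEC $T$ of $\G<[\Av / \Av']>$ is a SEC by (I), and if some SEC $C \supsetneq T$ existed then by (II) it would be an EC of $\G<[\Av / \Av']>$ strictly containing $T$, contradicting maximality of $T$; conversely an MSEC $C$ is an EC of $\G<[\Av / \Av']>$ by (II), and if it sat inside a strictly larger EC $C'$ of $\G<[\Av / \Av']>$ then by (I) $C'$ would be a SEC strictly containing $C$, contradicting maximality of $C$. The trivial components $\{\target\}$ and $\{\sink\}$ are handled as base cases, in line with the convention already used for $\{\mathfrak 1\}$ and $\{\mathfrak 0\}$.

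For (I), let $T$ be an EC of $\G<[\Av / \Av']>$ with witness action set $B \subseteq \Av'$. First I show $\val$ is constant on $T$. Let $m = \min_{\state\in T}\val(\state)$ and $D = \{\state \in T \mid \val(\state)=m\}$. For $\state\in D$ and any witness action $\action \in B$ available at $\state$ (which satisfies $\post(\state,\action)\subseteq T$ by Definition~\ref{def:EC}(1)) we have $\val(\state,\action) \ge m$ since all successors lie in $T$; moreover $\val(\state,\action) \le m$, for Maximizer states because $m=\val(\state)$ is a maximum, and for Minimizer states because $\action \in \Av'$ is $\val$-optimal, i.e.\ $\val(\state,\action)=\val(\state)=m$. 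Hence $\val(\state,\action)=m$, which forces every successor into $D$. Thus $D$ is closed under all witness actions, and by strong connectivity of the EC (Definition~\ref{def:EC}(2)) every state of $T$ is reachable from $D$ inside $T$ using $B$; therefore $T=D$ and $\val\equiv m$ on $T$. It remains to identify $m$ with $w := \exit<\val>[\Box](T)$. The inequality $w \le m$ is immediate, since every leaving action of a Maximizer state has $\val$-value at most $\val(\state)=m$. For $m \le w$ I fix a Minimizer strategy that stays inside $T$ forever (always playing a witness action at Minimizer states of $T$) and plays optimally outside; against it the play can only leave $T$ through a Maximizer leaving action, so by the standard analysis of reachability in a confined end component Maximizer's value on $T$ is $\max(w,0)=w$, whence $m=\val(\state)\le w$.

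For (II), let $C$ be a SEC, so $\val\equiv \exit<\val>[\Box](C)=:v_C$ on $C$, and let $B$ witness that $C$ is an EC of $\G$. No Maximizer action is ever removed, and any witness action $\action$ of a Minimizer state $\state\in C$ stays in $C$, so $\post(\state,\action)\subseteq C$ and $\val(\state,\action)=\sum_{\state'}\trans(\state,\action,\state')\,v_C = v_C = \val(\state)$; hence $\action$ is $\val$-optimal and survives in $\Av'$. Therefore $B\subseteq\Av'$, so $C$ is an EC of $\G<[\Av / \Av']>$, which proves (II). The crucial point here is that constancy of $\val$ on a SEC makes every Minimizer staying action automatically optimal, so nothing a SEC needs is deleted.

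I expect the main obstacle to be the equality $m=\exit<\val>[\Box](T)$ in (I), specifically the bound $m \le w$: making the confinement argument rigorous requires showing that once Minimizer commits to staying in $T$ the only escape routes are Maximizer's leaving actions and that looping forever yields reachability probability $0$, so Maximizer cannot realize more than his best exit value $w$. This is exactly where the interaction of the two players matters and where constancy of $\val$ on all of $T$ (rather than on a sub-EC) is essential. The remaining ingredients — optimality of the surviving Minimizer actions, the absorbing-extremal-set argument for constancy, and the purely graph-theoretic MEC/maximality bookkeeping — are routine.
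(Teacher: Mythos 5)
Your proof is correct and follows the same overall route as the paper's: both directions reduce to relating end components of the restricted game $\G<[\Av / \Av']>$ (Minimizer keeps only her $\val$-optimal actions) to SECs of $\G$, and your part (II) coincides with the paper's ``only if'' direction almost verbatim. Two of your sub-arguments, however, genuinely differ. First, to show that an EC $T$ of the restricted game has constant value, the paper fixes an exit state $e$ and unfolds the Bellman equations along paths to $e$, arguing that the coefficients of the non-$\val(e)$ terms vanish in the limit; you instead take the set $D$ of minimal-value states, show it is absorbing under the witness actions (using that every surviving Minimizer action is $\val$-optimal), and conclude $T=D$ by strong connectivity. Your version is more self-contained and avoids the paper's somewhat informal limit step. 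Second, your maximality bookkeeping---deriving both inclusion-maximality claims abstractly from the two facts ``EC of the restricted game $\Rightarrow$ SEC'' and ``SEC $\Rightarrow$ EC of the restricted game''---is cleaner than the paper's, whose ``only if'' direction appeals to $T$ being ``inclusion maximal in the original game,'' a phrase that strictly applies to MECs rather than MSECs. The step you flag as delicate, $m \le \exit<\val>[\Box](T)$, is exactly Lemma~\ref{lem:corrADJ} instantiated with $f=\val$ (no state of an EC can exceed the value of Maximizer's best exit), which the paper itself invokes with a one-line justification; your confinement sketch is at least as detailed, and a fully rigorous version follows by checking that the function equal to $\min(\val(\state),\exit<\val>[\Box](T))$ on $T$ and to $\val$ elsewhere is a pre-fixpoint of the Bellman operator and hence dominates the least fixpoint $\val$. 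Your explicit exclusion of $\set{\target}$ and $\set{\sink}$ is in fact necessary rather than cosmetic, since e.g.\ $\exit<\val>[\Box](\set{\target})=0\neq 1=\val(\target)$, a corner case the paper's statement glosses over.
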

\vspace{-0.8em}
\begin{proof}[Sketch]
``If'': Since $T$ is an MSEC, all states in $T$ have the value $\exit<\val>[\Box](T)$, and hence also all actions that stay inside $T$ have this value. Thus, no action that stays in $T$ is removed by Line \ref{line:findRemove} and it is still a MEC in the modified game.

``Only if'': If $T \in\FIND(\val)$, then $T$ is a MEC of the game where the suboptimal available actions (those in $X$) of Minimizer have been removed. Hence for all $\state \in T: \val(\state) = \exit<\val>[\Box](T)$, because intuitively Minimizer has no possibility to influence the value any further, since all actions that could do so were in $X$ and have been removed. Since $T$ is a MEC in the modified game, it certainly is an EC in the original game. Hence $T$ is a SEC. The inclusion maximality follows from the fact that we compute MECs in the modified game. Thus $T$ is an MSEC.
\qed
\end{proof}

\vspace{-0.8em}
\begin{remark}[Algorithm with an oracle]
	In Section \ref{sec:example}, we have seen that collapsing MECs does not ensure BVI convergence.
	Collapsing does not preserve the values, since in BECs we would be collapsing states with different values.
	Hence we want to collapse only MSECs, where the values are the same.
	If, moreover, we remove $X$ in such a collapsed SG, then there are no (non-sink) ECs and BVI converges on this SG to the original value.	
\end{remark}
	
\vspace{-0.8em}
	The difficulty with this algorithm is that it requires an oracle to compare values, for instance a sufficiently precise approximation of $\val$. 
	Consequently, we cannot pre-compute the MSECs, but have to find them while running BVI. 
	Moreover, since the approximations converge only in the limit we may never be able to conclude on simplicity of some ECs. For instance, if $\alpha=\beta$ in Figure \ref{cBCEC}, and if the approximations converge at different speeds, then Algorithm~\ref{alg:FIND} always outputs only a part of the EC, although the whole EC on $\mathsf{\{p,q,r\}}$ is \simple.
	
	In MDPs, all ECs are \simple, because there is no second player to be resolved and all states in an EC have the same value. Thus for MDPs it suffices to collapse all MECs, in contrast to SG. 

\vspace*{-1em}

\subsection{Dynamic MSEC decomposition}\label{sec:dynamic}

\vspace*{-0.5em}

Since MSECs cannot be identified from approximants of $\val$ for sure, we refrain from collapsing\footnote{Our subsequent method can be combined with local collapsing whenever the lower and upper bounds on $\val$ are conclusive.} and instead only decrease the over-approximation in the corresponding way.
We call the method \emph{deflating}, by which we mean decreasing the upper bound of all states in an EC to its $\exit<\ub>[\Box]$, see Algorithm \ref{alg:adjust}. 
The procedure $\ADJUST$ (called on the current upper bound $\ub<i>$) decreases this upper bound to the minimum possible value according to the current approximation and thus prevents states from only depending on each other, as in SECs.
Intuitively, it gradually approximates SECs and performs the corresponding adjustments, but does not commit to any of the approximations.

\vspace*{-2em}

\begin{algorithm}[H]
\caption{$\ADJUST$}\label{alg:adjust}
\begin{algorithmic}[1]
\Function{$\ADJUST$}{EC $T$, $f:\states\to[0,1]$}
	\For {$\state \in T$} 
		\State $\displaystyle f(\state) \gets \min(f(\state), \exit<f>[\Box](T))$\label{line:adjust} \Comment{Decrease the upper bound}
	\EndFor
	\State \Return $f$
\EndFunction
\end{algorithmic}
\end{algorithm}

\vspace*{-2.8em}
%
\begin{lemma}[$\ADJUST$ is sound]
	\label{lem:corrADJ}
For any $f:\states\to[0,1]$ such that $f\geq\val$ and any EC $T$,
$\ADJUST(T,f)\geq\val$.
\end{lemma}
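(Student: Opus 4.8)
The plan is to reduce the statement to a single inequality about $\val$ and then establish that inequality by a maximal-value argument. First I would note that $\ADJUST$ changes $f$ only on $T$, so for $\state\notin T$ we have $\ADJUST(T,f)(\state)=f(\state)\ge\val(\state)$ and nothing is to prove. For $\state\in T$ the new value is $\min(f(\state),\exit<f>[\Box](T))$; since $f(\state)\ge\val(\state)$ by hypothesis, it suffices to show $\exit<f>[\Box](T)\ge\val(\state)$. Because $f\ge\val$ pointwise implies $f(\state,\action)\ge\val(\state,\action)$ for every action (the overloaded notation of footnote~\ref{fn:overload} is monotone), and the maximum defining $\exit[\Box]$ ranges over the same set of Maximizer \leaving\ actions, we get $\exit<f>[\Box](T)\ge\exit<\val>[\Box](T)$. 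Everything thus reduces to the \emph{key claim}: for every EC $T$ with $\target\notin T$ and every $\state\in T$, $\val(\state)\le\exit<\val>[\Box](T)$.

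To prove the claim I would argue by contradiction. Write $e=\exit<\val>[\Box](T)$ and suppose $\mu:=\max_{\state\in T}\val(\state)>e$; let $D=\set{\state\in T\mid\val(\state)=\mu}\ne\emptyset$. The core step is to show that every state of $D$ has an available action all of whose successors lie in $D$. For a Maximizer state $\state\in D$, an optimal (maximizing) action attains $\val(\state)=\mu$; it cannot be a \leaving\ action, since every such action has value at most $e<\mu$, so it stays in $T$, and as $\val(\state,\action)=\mu$ is an average of values $\val(\state')\le\mu$ it must satisfy $\post(\state,\action)\subseteq D$. For a Minimizer state $\state\in D$, take any witnessing action $\action\in B$ of the end component (Definition~\ref{def:EC}); it stays in $T$, and since $\val(\state)=\min_{\action'}\val(\state,\action')=\mu$ we have $\val(\state,\action)\ge\mu$, while averaging over the $T$-successors gives $\val(\state,\action)\le\mu$; hence $\val(\state,\action)=\mu$ and again $\post(\state,\action)\subseteq D$.

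Finally I would turn this closure property into a contradiction. Fix $\state_0\in D$, let $\straa^*$ be an optimal deterministic Maximizer strategy (optimal memoryless strategies exist for reachability, see Section~\ref{sec:prelim}); at every state of $D$ its choice is a value-$\mu$ action, which by the previous step stays in $D$. Let $\strab'$ be any Minimizer strategy that on each Minimizer state of $D$ plays the internal $B$-action identified above. Under $(\straa^*,\strab')$ every positive-probability successor of a $D$-state is again in $D$, so starting from $\state_0$ the play stays in $D\subseteq T$ forever almost surely; as $\target\notin T$, this gives $\pr_{\state_0}^{\straa^*,\strab'}(\reach\target)=0$. On the other hand, optimality of $\straa^*$ guarantees $\pr_{\state_0}^{\straa^*,\strab}(\reach\target)\ge\val(\state_0)=\mu>0$ for every Minimizer strategy $\strab$, in particular for $\strab'$, a contradiction. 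Hence $D=\emptyset$, i.e. $\val\le e$ on $T$, which together with the first paragraph yields $\ADJUST(T,f)\ge\val$.

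The main obstacle I anticipate is the closure step for Minimizer: one must ensure the relevant action is both available and forced to land inside $D$ rather than merely inside $T$, which is exactly what the equality $\val(\state,\action)=\mu$ buys, combining the $\min$ lower bound with the averaging upper bound. A secondary point to flag is the hypothesis $\target\notin T$: the only EC containing $\target$ is $\{\target\}=\{\mathfrak 1\}$ (its single self-loop is never \leaving), for which the convention $\exit<\val>[\Box]=\max_\emptyset=0$ would give a spurious bound; this trivial EC is excluded from deflation, the value $\ub(\target)=1$ being fixed a priori, so soundness is not affected.
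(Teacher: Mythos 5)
Your proof is correct and follows the same reduction as the paper's own proof: it splits on which argument of the $\min$ in Line~\ref{line:adjust} is taken and reduces everything to the single inequality $\val(\state)\le\exit<\val>[\Box](T)$ for $\state\in T$. The paper merely asserts this inequality (``no state can achieve a greater value than the best exit''), whereas you actually prove it via the closure of the maximal-value set $D$ under optimal Maximizer actions and internal Minimizer actions together with an explicit strategy construction, and you also correctly flag the implicit exclusion of the trivial EC $\{\target\}$, for which the inequality fails; both of these additions are sound.
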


\vspace*{-0.5em}
This allows us to define our BVI algorithm as the naive BVI with only the additional lines \ref{line:loopAdjust}-\ref{line:callAdjust}, see Algorithm \ref{alg:BVIA}.

\vspace*{-2em}

\begin{algorithm}[H]
\caption{$\UPDATE$ procedure for bounded value iteration on SG}\label{alg:BVIA}
\begin{algorithmic}[1]
\Procedure{$\UPDATE$}{$\lb:\states\to[0,1]$, $\ub:\states\to[0,1]$}
\State $\lb,\ub$ get updated according to Eq.~(\ref{eq:Lsa})~and~(\ref{eq:Ls}) \Comment{Bellman updates}
\smallskip
\For{$T \in \FIND(\lb)$} \label{line:loopAdjust} ~~~~~~~~\Comment{Use lower bound to find ECs}
	\State $\ub\gets\ADJUST(T,\ub)$ \label{line:callAdjust} ~~~~~~~~~~~~~\Comment{and deflate the upper bound there}
\EndFor
\EndProcedure
\end{algorithmic}
\end{algorithm}

\vspace*{-2.3em}

\begin{theorem}[Soundness and completeness]\label{BVIA_term}
Algorithm~\ref{alg:gen} (calling Algorithm~\ref{alg:BVIA}) produces monotonic sequences $\lb$ under- and $\ub$ over-approximating $\val$, and terminates.
\end{theorem}
\vspace*{-1.2em}
\begin{proof}[Sketch]
	The crux is to show that $\ub$ converges to $\val$.
	We assume towards a contradiction, that there exists a state $\state$ with $\lim_{n \to \infty} \ub<n>(\state)-\val(\state) > 0$. Then there exists a nonempty set of states $X$ where the difference between $\lim_{n \to \infty} \ub<n>$ and $\val$ is maximal. If the upper bound of states in $X$ depends on states outside of $X$, this yields a contradiction, because then the difference between upper bound and value would decrease in the next Bellman update. So $X$ must be an EC where all states depend on each other. However, if that is the case, calling $\ADJUST$ decreases the upper bound to something depending on the states outside of $X$, thus also yielding a contradiction.
\qed 
\end{proof}

\vspace*{-1.8em}
\subsection*{Summary of our approach:}
\vspace*{-0.5em}
\begin{enumerate}
	\item We cannot collapse MECs, because we cannot collapse BECs with non-constant values. 
	\item If we remove $X$ (the sub-optimal actions of Minimizer) we can collapse MECs (now actually MSECs with constant values).
	\item Since we know neither $X$ nor SECs we gradually deflate SEC approximations. 
\end{enumerate}

\subsection{Learning-based algorithm}\label{sec:learn}

\emph{Asynchronous value iteration} selects in each round a subset $T\subseteq S$ of states and performs the Bellman update in that round only on $T$.
Consequently, it may speed up computation if ``important'' states are selected.
However, using the standard VI it is even more difficult to determine the current error bound.
Moreover, if some states are not selected infinitely often the lower bound may not even converge.

In the setting of bounded value iteration, the current error bound is known for each state and thus convergence can easily be enforced.
This gave rise to asynchronous VI, such as BRTDP (bounded real time dynamic programing) in the setting of stopping MDPs \cite{BRTDP}, where the states are selected as those that appear on a simulation run.
Very similar is the adaptation for general MDP \cite{atva}.
In order to simulate a run, the transition probabilities determine how to resolve the probabilistic choice.
In order to resolve the non-deterministic choice of Maximizer, the ``most promising action'' is taken, i.e., with the highest $\ub$.
This choice is derived from a reinforcement algorithm called delayed Q-learning and ensures convergence while practically performing well~\cite{atva}.

In this section, we harvest our convergence results and BVI algorithm for SG, which allow us to trivially extend the asynchronous learning-based approach of BRTDP to SGs.
On the one hand, the only difference to the MDP algorithm is how to resolve the choice for Minimizer.
Since the situation is dual, we again pick the ``most promising action'', in this case with the lowest $\lb$.
On the other hand, the only difference to Algorithm~\ref{alg:gen} calling Algorithm~\ref{alg:BVIA} is that the Bellman updates of $\ub$ and $\lb$  are performed on the states of the simulation run only, see lines \ref{line:sim}-\ref{line:sim-up} of Algorithm~\ref{alg:BRTDP}. 

\vspace{-2em}

\begin{algorithm}[H]
\caption{Update procedure for the learning/BRTDP version of BVI on SG}\label{alg:BRTDP}
\begin{algorithmic}[1]
\Procedure{$\UPDATE$}{$\lb:\states\to[0,1]$, $\ub:\states\to[0,1]$}
\State $\rho\gets$ path $\initstate,\state<1>,\ldots,\state<\ell>$ of length $\ell\leq k$, obtained by simulation where the successor of $\state$ is $\state'$ with probability $\trans(\state,\action,\state')$ and $\action$ is sampled randomly from 
$\arg\max_{\action} \ub(\state,\action)$ and $\arg\min_{\action} \lb(\state,\action)$ for $\state\in\states<\Box>$ and $\state\in\states<\circ>$, respectively \label{line:sim}
\State $\lb,\ub$ get updated by Eq.~(\ref{eq:Lsa}) and (\ref{eq:Ls}) on states $\state<\ell>,\state<\ell-1>,\ldots,\initstate$\hspace*{-2mm} \Comment{all $\state\in\rho$} \label{line:sim-up}
\smallskip
\For{$T \in \FIND(\lb)$} \label{line:loopAdjust2}
	\State $\ADJUST(T,\ub)$ \label{line:callAdjust2}
\EndFor
\EndProcedure
\end{algorithmic}
\end{algorithm}

\vspace{-2em}

If $\target$ or $\sink$ is reached in a simulation, we can terminate it.
It can happen that the simulation cycles in an EC. 
To that end, we have a bound $k$ on the maximum number of steps.
The choice of $k$ is discussed in detail in \cite{atva} and we use~$2\cdot |\states|$ to guarantee the possibility of reaching sinks as well as exploring new states.
If the simulation cycles in an EC, the subsequent call of $\ADJUST$ ensures that next time there is a positive probability to exit this EC.
Further details can be found in Appendix~\ref{app:brtdp}.
\vspace{-0.5em}

%
%


\section{Experimental results}\label{sec:4}\label{sec:exper}

\vspace{-1em}
We implemented both our algorithms as an extension of PRISM-games~\cite{PRISM-games}, a branch of PRISM~\cite{prism} that allows for modelling $\SGs$, utilizing previous work of \cite{atva,MatPhD} for MDP and SG with single-player ECs.
We tested the implementation on the SGs from the PRISM-games case studies~\cite{gamesBen} that have reachability properties and one additional model from~\cite{cloud} that was also used in~\cite{MatPhD}. We compared the results with both the explicit and the hybrid engine of PRISM-games, but since the models are small both of them performed similar and we only display the results of the hybrid engine in Table~\ref{table:time}.

Furthermore we ran experiments on MDPs from the PRISM benchmark suite~\cite{PRISMben}.
We compared our results there to the hybrid and explicit engine of PRISM, the interval iteration implemented in PRISM~\cite{BVI}, the hybrid engine of \textsc{Storm}~\cite{Storm} and the BRTDP implementation of~\cite{atva}.

Recall that the aim of the paper is not to provide a faster VI algorithm, but rather the first  guaranteed one.
Consequently, the aim of the experiments is not to show any speed ups, but to experimentally estimate the overhead needed for computing the guarantees.

The appendix contains information on the technical details of the experiments (\ref{app:tech}), all the models (\ref{sec:models}) and the tables for the experiments on MDPs (\ref{sec:MDPexp}). Note that although some of the SG models are parametrized they could only be scaled by manually changing the model file, which complicates extensive benchmarking.

Although our approaches compute the additional upper bound to give the convergence guarantees, for each of the experiments one of our algorithms performed similar to PRISM-games. Table~\ref{table:time} shows this result for three of the four SG models in the benchmarking set. On the fourth model, PRISM's pre-computations already solve the problem and hence it cannot be used to compare the approaches. For completeness, the results are displayed in Appendix \ref{sec:SGexp}.

Whenever there are few MSECs, as in mdsm and cdmsn, BVI performs like PRISM-games, because only little time is used for deflating. 
Apparently the additional upper bound computation takes very little time in comparison to the other tasks (e.g. parsing, generating the model, pre-computation) and does not slow down the verification significantly.
For cloud, BVI is slower than PRISM-games, because there are thousands of MSECs and deflating them takes over 80\% of the time. This comes from the fact that we need to compute the expensive end component decomposition for each deflating step.
BRTDP performs well for cloud, because in this model, as well as generally often if there are many MECs~\cite{atva}, only a small part of the state space is relevant for convergence. For the other models, BRTDP is slower than the deterministic approaches, because the models are so small that it is faster to first construct them completely than to explore them by simulation. 

 
\begin{table}[t] \centering \caption{Experimental results for the experiments on SGs. The left two columns denote the model and the given parameters, if present. Columns 3 to 5 display the verification time in seconds for each of the solvers, namely PRISM-games (referred as PRISM), our BVI algorithm (BVI) and our learning-based algorithm (BRTDP). The next two columns compare the number of states that BRTDP explored (\#States\_B) to the total number of states in the model. The rightmost column shows the number of MSECs in the model.}\vspace{-0.5em} \label{table:time}
	\medskip
	
	 \makebox[\linewidth]{ 
\begin{tabular}{|c|c||c|c|c||c|c|c|} \hline 
Model & Parameters  & PRISM& BVI & BRTDP & \#States\_B & \#States & \#MSECs \\ \hline \hline
\multirow{2}{*}{mdsm} 
& prop=1 
& 8 
& 8 
 & 17
 & 767 & 62,245
&1
\\ \cline{2-8}
& prop=2 
& 4 
& 4 
 & 29
 & 407 & 62,245
&1
\\ \cline{2-8}
\hline
\multirow{1}{*}{cdmsn} 
&
& 2 
& 2 
 & 3
 & 1,212 & 1,240
& 1
\\ \cline{2-8}
\hline
\multirow{2}{*}{cloud} 
& N=5 
& 3 
& 7 
 & 15
 & 1,302 & 8,842
& 4,421
\\ \cline{2-8}
& N=6 
& 6 
& 59 
 & 4
 & 570 & 34,954
& 17,477
\\ \cline{2-8}
\hline
\end{tabular} } \vspace*{-1em}\end{table}

Our more extensive experiments on MDPs compare the guaranteed approaches based on collapsing (learning-based from~\cite{atva} and deterministic from \cite{BVI}) to our guaranteed approaches based on deflating (so BRTDP and BVI).
Since both learning-based approaches as well as both deterministic approaches perform similarly (see Table \ref{table:timeMDP} in Appendix \ref{sec:MDPexp}), we conclude that collapsing and deflating are both useful for practical purposes, while the latter is also applicable to SGs.
Furthermore we compared the usual unguaranteed value iteration of PRISM's explicit engine to BVI and saw that our guaranteed approach did not take significantly more time in most cases. This strengthens the point that the overhead for the computation of the guarantees is negligible
\vspace{-1.3em}

\section{Conclusions} \label{sec:concl}

\vspace{-0.7em}
We have provided the first stopping criterion for value iteration on simple stochastic games and an anytime algorithm with bounds on the current error (guarantees on the precision of the result).
The main technical challenge was that states in end components in SG can have different values, in contrast to the case of MDP.
We have shown that collapsing is in general not possible, but we utilized the analysis to obtain the procedure of \emph{deflating}, a solution on the original graph.
Besides, whenever a SEC is identified for sure it can be collapsed and the two techniques of collapsing and deflating can thus be combined.

The experiments indicate that the price to pay for the overhead to compute the error bound is often negligible.
For each of the available models, at least one of our two implementations has performed similar to or better than the standard approach that yields no guarantees.
Further, the obtained guarantees open the door to (e.g. learning-based) heuristics which treat only a part of the state space and can thus potentially lead to huge improvements.
Surprisingly, already our straightforward adaptation of such an algorithm for MDP to SG yields interesting results, palliating the overhead of our non-learning method, despite the most naive implementation of deflating.
Future work could reveal whether other heuristics or more efficient implementation can lead to huge savings as in the case of MDP \cite{atva}.
 
%
%
%

\newpage

\bibliographystyle{alpha}
\bibliography{ref}

\appendix
\section*{Appendix}

\section{Technical information and pseudocode}

\subsection{Definition of $\COLLAPSEFUN$}\label{app:coll}
$\COLLAPSEFUN$ is not only able to collapse ECs in MDPs, but also SECs in SGs. Note that every EC in an MDP is a SEC. 
If there are no actions of Maximizer leaving the SEC, since the game is non-blocking, we have to keep staying actions, so the SEC becomes a sink.
Otherwise all staying actions are removed and the SEC becomes a single state with all leaving actions of states in the SEC available.

\begin{definition}[$\COLLAPSEFUN$]
Let $\exGame$ be an $\SG$ and $T$ a SEC in $\G$. Then 
$\COLLAPSEFUN(\G,T) = \exGame'$, where $\G'$ is defined as follows:\\
\begin{itemize}
\item $\states' = (\states \setminus T) \cup \{\state<T>\}$
\item $\states<\Box>' = (\states<\Box> \setminus T) \cup \set{\state<T>}$
\item $\states<\circ>' = (\states<\circ> \setminus T)$
\item $\initstate' = \begin{cases} \state<T> &\mbox{if } (\initstate \in T) \\ \initstate &\mbox{else}\end{cases}$
\item $\actions' = \actions$
\item $\Av'(\state)$ is defined for all $\state \in \states'$ by:
	\begin{itemize} 
	\item $\Av(\state)$,\\
	 if $\state \in (\states \setminus T)$, i.e. $\state \neq \state<T>$\\ (Rest stays the same)
	\item $\bigcup_{t \in T} \set{\action \mid \action \in \Av(t) \wedge (t,\action) \stays T}$ \\
	 if  $\state=\state<T> \wedge \bigcup_{t \in T_\Box} \set{\action \mid \action \in \Av(t) \wedge (t,\action) \leaves T} = \emptyset$ \\
	 (Keep staying actions, if there is no exit for Maximizer)
	\item $\bigcup_{t \in X_\Box} \set{\action \mid \action \in \Av(t) \wedge (t,\action) \leaves X}$,\\
	if $\state=\state<T>$ \\
	(Keep leaving actions, if there is an exit for Maximizer)
	\end{itemize}
\item $\trans'$ is defined for all $\state \in \states'$ and $\action \in \Av'(\state)$ by:
	\begin{itemize}
	\item $\trans'(\state,\action)(\state') = \trans(\state,\action)(\state')$\\
	for all $\state' \in \states'$ with $\state,\state' \neq \state<T>$ \\
	(Rest stays the same)
	\item $\trans'(\state,\action)(\state<T>) = \sum_{\state' \in X} \trans(\state,\action)(\state')$,\\
	if $\state \neq \state<T>$ \\
	(going to $X$)
	\item $\trans'(\state<T>,\action)(\state') = \trans(\state,\action)(\state')$\\
	 for all $\state' \in \states' \setminus \set{\state<T>}$ \\
	 (leaving from $X$)
	\item $\trans'(\state<T>,\action)(\state<T>) =\sum_{t \in T} \trans(\state,\action)(t)$\\
	 for all $\state \in T$ such that  $\action \in \Av(\state)$ \\
	 (staying in $T$)
	\end{itemize}
\end{itemize}
\end{definition}

\subsection{Pseudocode for Bellman update}\label{sec:UPDATEVI}
This is the pseudocode for the usual update procedure used for value iteration. It amounts to performing Bellman updates, i.e. applying Equations \ref{eq:Lsa} and \ref{eq:Ls} to the functions $\ub$ and $\lb$ once.
\begin{algorithm}[H]
\caption{$\UPDATE$ procedure for (naive) value iteration on SG}\label{alg:UPDATEVI}
\begin{algorithmic}[1]
\Procedure{$\UPDATE$}{$\lb:\states\to[0,1]$, $\ub:\states\to[0,1]$}
\For {$\state \in \states<\Box>$}
	\State $\lb(\state) \gets 
			\max_{\action \in \Av(\state)} \sum_{s' \in S} \trans(\state,\action,\state') \cdot \lb(\state')$
	\State $\ub(\state)\gets\max_{\action \in \Av(\state)} \sum_{s' \in S} \trans(\state,\action,\state') \cdot \ub(\state') $
\EndFor
\smallskip
\For {$\state \in \states<\circ>$}
	\State 
	$\lb(\state) \gets 
		\min_{\action \in \Av(\state)} \sum_{s' \in S} \trans(\state,\action,\state') \cdot \lb(\state') $
	\State
	$\ub(\state) \gets 
			\min_{\action \in \Av(\state)} \sum_{s' \in S} \trans(\state,\action,\state') \cdot \ub(\state') $
\EndFor
\EndProcedure
\end{algorithmic}
\end{algorithm}

\subsection{Example of a converging EC}\label{app:convEC}
Figure \ref{ex:convEC} depicts an EC of an SG where $\gub=\val$ and thus BVI converges for this EC.
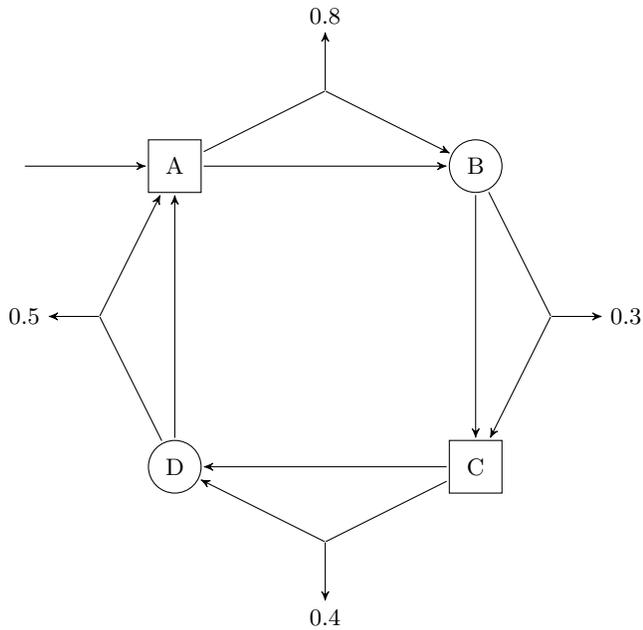
\begin{figure}
\begin{tikzpicture}
\drawdummy (init) at (-2,4) {};

\drawbox (A) at (0,4) {A};
\drawcirc (B) at (4,4) {B};
\drawbox (C) at (4,0) {C};
\drawcirc (D) at (0,0) {D};

\drawdummy (AB) at (2,5) {};
\drawdummy (BC) at (5,2) {};
\drawdummy (CD) at (2,-1) {};
\drawdummy (DA) at (-1,2) {};

\node (ABo) at (2,6) {0.8};
\node (BCo) at (6,2) {0.3};
\node (CDo) at (2,-2) {0.4};
\node (DAo) at (-2,2) {0.5};

\draw[->] (init) to (A);

\draw[->] (A) to (B);
\draw[->] (B) to (C);
\draw[->] (C) to (D);
\draw[->] (D) to (A);

\draw[-] (A) to (AB);
\draw[-] (B) to (BC);
\draw[-] (C) to (CD);
\draw[-] (D) to (DA);

\draw[->] (AB) to (ABo);
\draw[->] (AB) to (B);
\draw[->] (BC) to (BCo);
\draw[->] (BC) to (C);
\draw[->] (CD) to (CDo);
\draw[->] (CD) to (D);
\draw[->] (DA) to (DAo);
\draw[->] (DA) to (A);

\end{tikzpicture}
\caption{An example of an EC where $\gub=\val$. 
The numbers on the outgoing edges represent the values of the successor states. In this EC BVI converges, since it is no $\CEC$ and there is only a single solution for the Bellman equations.
Using the leaving action from $A$, the maximizing player can achieve a value of 0.8 (with a probability of one half), which is the best possible value. The minimizer can achieve the least possible value of 0.3 (with probability one half) from $B$. If either of the players decided not to leave, his opponent would get infinitely many chances at leaving, and hence a suboptimal value would be achieved. Thus both players pick their leaving actions and BVI converges for this EC.}
\label{ex:convEC}
\end{figure}

\subsection{Additional information to the BRTDP algorithm}\label{app:brtdp}
This section provides a more detailed pseudocode and some intuitions for Algorithm \ref{alg:BRTDP}. It is based on~\cite[Algorithm 7]{MatPhD}. 

We introduce the notion of $\best$ actions to avoid having a case distinction over the players. 
\begin{definition}[$\best$ actions]\label{def:best}
The set of $\best$ actions for a state $\state$, given the the current $\ub$ and $\lb$ \\
$\best_{\ub,\lb}(\state) := 
\begin{cases} \set{\action \in \Av(\state) \mid \ub(\state,\action) = \max_{\action \in \Av(\state)}\ub(\state,\action)}		&\mbox{if } \state \in \states<\Box>  \\
\set{\action \in \Av(\state) \mid \lb(\state,\action) =\min_{\action \in \Av(\state)}\lb(\state,\action)} &\mbox{if } \state \in \states<\circ>. \end{cases}$,
\end{definition}

We also need the notion of a restricted game, because we want to restrict computation to only the explored state space.
\begin{definition}[Restricted game $\mathsf{G}^{\Vis}$]
Let $\exGame$ be an $\SG$ and $\Vis \subseteq \states$.
Then $\exGame[\Vis]$, i.e. the game restricted to $\Vis$, is defined as follows:
 \begin{itemize}
\item $\states<\Box>[\Vis] = \Vis_\Box \cup \set{t \mid \exists \state \in \Vis, \action \in \Av(\state)$ such that $\trans(\state,\action,t)>0 \wedge t \in \states<\Box>}$
\item $\states<\circ>[\Vis] = \Vis_\circ \cup \set{t \mid \exists \state \in \Vis, \action \in \Av(\state)$ such that $\trans(\state,\action,t)>0 \wedge t \in \states<\circ>}$
\item $\states[\Vis] = \states<\Box> \cup \states<\circ>$
\item $\actions[\Vis] = \bigcup_{\state \in \Vis} \Av(\state) \cup \set{\bot}$
\item $\Av[\Vis](\state) = \Av(\state)$ if $\state \in \Vis$, else $\Av[\Vis](\state) = \bot$
\item $\trans[\Vis](\state,\action) = \trans(\state,\action)$ if $\state \in \Vis$, else $\trans(\state,\bot)(\state) = 1$
\end{itemize}
\end{definition}

The following pseudocode is for the complete BRTDP version of BVI on SGs, not only the update procedure. This is because we have to initialize the additional variable $\Vis$ to remember the set of visited states. After the initialization, it runs sample trials until convergence.

\begin{algorithm}[H]
\caption{Complete BRTDP version of BVI on SGs}\label{alg:BRTDPapp}
\begin{algorithmic}[1]
\Procedure{BRTDP}{$\SG$ $\G$, precision $\epsilon$}
\State \Comment{Initialization}
\State $\ub(\state) = 1$ for all $\state \in \states$
\State $\lb(\state) = 0$ for all $\state \in \states$
\State $\lb(\target) = 1$
\State $\Vis \gets \emptyset$ \Comment{Set of states visited so far.}
\smallskip
\Repeat 
	\State $\rSTA(\initstate,\ub,\lb,\Vis,\epsilon)$
	\State $i \gets i+1$
\Until{$\ub(\initstate) - \lb(\initstate) < \epsilon$}\label{line:BRTDPterm}
\EndProcedure
\end{algorithmic}
\end{algorithm}

A sample trial is divided into three phases: Simulating, updating and deflating.

In the simulation phase, starting from the initial state a path through the game gets sampled, always picking a $\best$ action and then sampling a successor of that action according to the $\GETSUCC$-function. This function is not explicitly stated here, but different versions are discussed in~\cite{atva}. For example, one can sample the successor according to the probability distribution or one can pick the successor with the maximal difference between its bounds in order to get to regions of the state space that have not yet been done. 
The set of visited states is saved in $\Vis$, in order to be able to compute the restricted game later in the deflating phase. 
To avoid the simulation getting stuck in an EC, there is the condition in Line~\ref{line:explorebreak}. For the function $\EXPLOREBREAK$ several versions are possible, too. One can break the simulation as soon as one state gets seen the second time on a path or after a certain number of steps $k$. $k$ should be larger than the current explored state space size in order to ensure that new states can be visited with positive probability. The formulas for $k$ given in~\cite{MatPhD} result in numbers much larger than the current explored state space, so that the paths get very long and if the simulation is stuck a lot of time is wasted. Our experiments have shown that the simple formula $2 \cdot |\states[\Vis]|$ is very fast.

In the update phase, the bounds for all state-action-pairs on the path get updated according to the Bellman equations. To write this concisely, we utilize the notation of $\best$ actions.

The deflating phase works exactly as in Algorithms \ref{alg:BVIA} and \ref{alg:BRTDP}, only that now we apply $\FIND$ to the restricted game. We have noted this in the pseudocode by adding the parameter $\G[\Vis]$, thereby specifying the implicit parameter of the SG that $\FIND$ is applied to.

Note that we have two functions that we use for accessing the elements of the path $\path$, namely $last$ and $pop$. $last$ returns the last element of the path without removing it, $pop$ returns that element and removes it from $\path$.

Many optimizations can be applied to the algorithm, for example stopping the simulation not only in $\target$ and $\sink$, but in any state where the bounds have converged. One can also only deflate if the simulation was stopped due to $\EXPLOREBREAK$, because otherwise the simulation was not stuck and quite possibly deflating is not needed. Computing the MEC decomposition of $\G[\Vis]$ before finding the MSECs is also speeding up the computation. 
Some other optimizations are only helpful in certain cases, for example only adjusting if the simulation has been broken several times or only adjusting the last EC on the path.
\newpage

\begin{algorithm}[H]
\caption{Procedure for sampling and updating}\label{alg:rst}
\begin{algorithmic}[1]
\Procedure{$\rSTA$}{state $\state$, upper bound function $\ub$, lower bound function $\lb$, set of states $\Vis$, precision $\epsilon$}
\State \Comment{Simulation phase}
\State $\path \gets \initstate$
\Repeat
	\State $\action \gets $ sampled from $\best_{\ub,\lb}(last(\path))$
	\State $\state \gets \GETSUCC (last(\path),\action)$
	\State $\path \gets \path \: \action \: \state$
	\State $\Vis \gets \Vis \cup \state$
	\If {$\EXPLOREBREAK(\path)$} \label{line:explorebreak}
		\State \textbf{break}
	\EndIf
\Until{$\state \in \set{\target,\sink}$}
\smallskip
\smallskip
\State \Comment{Update phase}
\State $pop(\path)$ \Comment{Remove the last state} \label{BRTDP_Apop}
\Repeat
	\State $\state \gets pop(\path)$
	\State $\action \gets pop(\path)$
	\State $\ub(\state,\action) \gets \sum_{\state' \in \states} \trans(\state,\action) (\state') \cdot \ub(\state')$
	\State $\lb(\state,\action) \gets \sum_{\state' \in \states} \trans(\state,\action) (\state') \cdot \lb(\state')$
\Until {$\path$ is empty}
\smallskip
\smallskip
\State \Comment{Deflating phase}
\State compute $\G[\Vis]$\label{line:gVis}
\For{$T \in \FIND(\lb,\G[\Vis])$} \label{line:loopAdjust2}
	\State $\ADJUST(T,\ub)$ \label{line:callAdjust2}
\EndFor
\EndProcedure
\end{algorithmic}
\end{algorithm}

\section{Experimental setup}
\subsection{Technical details}\label{app:tech}
The experiments were conducted on a server with 256 GB RAM and 2 Intel(R) Xeon(R) E5-2630 v4 2.20 GHz processors. However, computation was limited to one core to avoid results being incomparable due to different times spent parallelizing. 
All model checkers worked at a precision of $\epsilon=10^{-6}$. Each experiment had a timeout of 15 minutes. An X in a table indicates that the model checker was unable to finish the computation in the time limit. 
We set the available Java memory to 16GB to enable the solvers to construct the models, although still the largest versions of csma and mer could not be loaded with the explicit engine.
Since the simulation based approaches are randomized, we took the median of 20 repetitions of the experiments.
For SGs, we ran the experiments both with the hybrid and the explicit engine of PRISM-games. However, since the models are small, i.e. less than 100,000 states, verification times for the hybrid and explicit approach of PRISM never differ by more than a second, and hence we only included the hybrid engine in Table \ref{table:time}.

\subsection{Models}\label{sec:models}
Our experiments are based on the ones that were conducted in \cite{MatPhD}. Most models we use are also analyzed in that thesis, and we obtained them from the website~\cite{modelsource}, where the author of the thesis made them available for download. We used the exact models from that website, but partly modified the properties to be of a form that our implementation can handle. These modifications did not change the semantics of the property, e.g. instead of formulating a property that a probability is greater than a certain number(P$>$0.999) we compute the maximal probability (Pmax=?), and then manually check whether it is greater than the number. 
The only models not from \cite{MatPhD} are csma and leader, which are part of the examples included in PRISM 4.4.

We consider six MDP models, namely firewire, wlan, zeroconf, csma, leader and mer. The first five are part of the PRISM benchmark suite \cite{PRISMben}, mer is from~\cite{mer}.  The four $\SG$ models are mdsm, cdmsn, team-form and cloud, and the first three are contained in the PRISM-games case studies~\cite{gamesBen}. Cloud is from~\cite{cloud}. Note that some of the $\SG$ models actually contain more than two players. However, since there are at most two coalitions, they can be viewed as an $\SG$ with only two players. We will now shortly describe all models, the properties we check and the parameters we use for scaling.

\subsubsection{firewire}~\cite{firewire}:\\
This case study models the protocol known as FireWire, which is a leader election protocol of the IEEE 1394 High Performance Serial Bus. Several devices connected to a bus can use the protocol to dynamically elect a leader.
We compute the probability Pmax=?~[F\nolinebreak~leader\_ {\nolinebreak}elected], so the maximal probability with which a leader gets elected before the deadline. By this one can check the property that a leader gets elected with a certain, optimally high, probability. To scale the model up, we raise the deadline.  

\subsubsection{wlan}~\cite{wlan}:\\
This model describes the two-way handshake mechanism of the IEEE 802.11 medium access control (WLAN protocol). Two stations try to communicate with each other; however, if both of them send at once, a collision occurs. 
We are interested in computing the maximum probability that both stations transmit their messages correctly, i.e. Pmax=?~[F~s1=12~\&~s2=12], where s1 and s2 describe the state of the stations, and 12 is the final state where the transmission was successful. To scale the model up, we increase the maximal backoff $k$ and the maximal number of collisions COL.

\subsubsection{zeroconf}~\cite{zeroconf}:\\
Zeroconf is a protocol for dynamically assigning an IP address to a device, provided that several other hosts have already blocked some IP addresses. The device picks some IP randomly and then sends probes to check whether this address is already in use. The parameters that we use to scale the model are $N$, the number of other hosts already possessing an IP address and $K$, the number of probes sent. The probability we are interested in is Pmin=?~[F~configured], so the minimum probability with which the device obtains an IP address. 

\subsubsection{csma}:~\cite{PRISMben}\\
This case study concerns the IEEE 802.3 CSMA/CD (Carrier Sense, Multiple Access with Collision Detection) protocol. N is the number of stations and K is the maximum backoff. Pmin=? [ F min\_backoff\_after\_success$<$=K ] is the probability we are interested in, namely that a message of some station is eventually delivered before k backoffs.

\subsubsection{leader}~\cite{PRISMben}:\\
 This case study is based on the asynchronous leader election protocol of \cite{leader}. This protocol solves the following problem.
 Given an asynchronous ring of N processors design a protocol such that they will be able to elect a leader (a uniquely designated processor) by sending messages around the ring.  The probability we are interested in is Pmax=? [ F "elected" ], so that at some point a leader is elected.

\subsubsection{mer}~\cite{mer}:\\
In the Mars Exploration Rover there is a resource arbiter that handles distributing resources to different users. There is a probability $x$ that the communication between the arbiter and the users fails. We change this probability to influence the structure of the MDP. The probability we compute is Pmax=? [F~err\_G], which is the maximum probability that an error occurs. 

\subsubsection{mdsm}~\cite{mdsm}:\\
This case study models multiple households which all consume different amounts of energy over time. To minimize the peak energy consumption, they utilize the distributed energy management ``Microgrid Demand-Side Management'' (mdsm). The property we check is the maximal probability with which the first household can deviate from the management algorithm, i.e. Pmax=? [F~deviated], which should be smaller than 0.01. We check the property once for player 1 and once for player 2.  

\subsubsection{cdmsn}~\cite{mdsm,cdmsn}:\\
This model describes a set of agents which have different sites available and different preferences over these sites. The collective decision making algorithm of this case study is utilized so that the agents agree on one decision. We analyze the model to find the strategy for player 1 to make the agents agree on the first site with a high probability, so $<<$p1$>>$~Pmax=? [F all\_prefer\_1]. 

\subsubsection{team-form}~\cite{teamform}:\\
As in the previous case study, there are a set of agents in a distributed environment. They need to form teams so they are able to perform a set of tasks together. We want to compute a strategy so that the first task is completed with the maximal possible probability, so we check the property $<<$p1,p2,p3$>>$ Pmax=? [F task1\_completed]. We scale the model using the number of agents N. 

\subsubsection{cloud}~\cite{cloud}:\\
This model describes several servers and virtual machines forming a cloud system. The controller of the system wants to deploy a web application on one of the virtual machines, but it is possible that the servers fail due to hardware failures. We compute the strategy and the maximal probability for the controller to successfully deploy his software, so $<<$controller$>>$ Pmax=? [F~deployed]. The model can be scaled by increasing the number of virtual machines N.\\\\

\subsection{Implementation optimizations}
We tried several optimizations for both our algorithms, e.g. deflating an MSEC repeatedly until no change greater than the precision occurs instead of changing the upper bounds only once, only deflating the last EC that BRTDP explored and prefering exiting actions in BRTDP. None of these showed a significant improvement.

However, one optimization may greatly influence the verification time, namely changing how often we execute the deflating step.
For correctness, it is only required that deflation is called regularly; by that we mean infinitely often, if the algorithm would not terminate at some point. In different words: For each iteration it holds, that in a future iteration deflate is executed, except for the iterations between between the last deflating step and termination. For those it only holds, that a deflating step would have occurred, had the algorithm not terminated.
We executed $\ADJUST$ only every $n$ steps for different $n$ between 1 and 1000. 
Depending on the model, the choice of $n$ may be irrelevant or influence performance even by an order of magnitude. Furthermore, both a large $n$ (100) or a small $n$ (1) can be the best choice.
It seems sensible to not execute the expensive deflation after every Bellman update, when there are not many MSECs and information just needs to be propagated. However, if many MSECs are ordered sequentially, steps without a deflation are mostly useless. This is why information on the model is needed to decide how often $\ADJUST$ should be called. Maybe it is possible to decide this during the value iteration, but we have not yet found any heuristic for that.

\subsection{Experiments on MDPs}\label{sec:MDPexp}
For the experiments on the MDPs we used four different programs, namely PRISM 4.4.beta, \textsc{Storm} 1.1.0~\cite{Storm}, the implementation that was used in \cite{atva} (called BRTDP\_coll) and our own implementation, i.e. BVI and BRTDP. We ran PRISM in three different version, namely once with the hybrid engine (called PRISM\_h), once with the explicit engine (called PRISM\_e) and once with interval iteration (called PRISM\_ii), so the approach using collapsing to give guarantees, with the explicit engine. The first configuration shows what PRISM can achieve, the second and third are fair competitors for BVI, since all of them have to construct the whole model explicitly.

Tables \ref{table:timeMDP} and \ref{table:statesMDP} show the results of the experiments, namely the complete verification times for all models and the state space size with the number of states that the simulation based approaches explored.\\

The results in Table \ref{table:timeMDP} show that collapsing and deflating perform quite similar, since PRISM\_ii and BVI as well as BRTDP\_coll and BRTDP produced similar verification times. 

On smaller models, e.g. wlan for the first four rows, csma the first four rows and leader the first three rows, PRISM\_ii and BVI are not significantly slower than PRISM\_h. For the other models, the gain of the hybrid engine makes PRISM\_d and \textsc{Storm} a lot faster. 
Most probably when implementing BVI for the hybrid engine the overhead for giving the guarantee will also be small. 
In all models but zeroconf for K=10 and mer for x=0.0001, PRISM\_e and BVI produce times in the same order of magnitude, so the overhead for computing the guarantees is not too large.

The simulation based approaches BRTDP and BRTDP\_coll perform well on firewire, wlan and zeroconf, even outperforming \textsc{Storm} in some cases. For firewire they are two orders of magnitude faster. So in certain cases, simulation based approaches can produce a huge speedup while still giving guarantees.
However for csma, leader and mer they are not well suited, as they need to explore thousands of states to achieve convergence. For the first two rows of mer they are still faster, since the explored part of the state space is very small in comparison to the whole model and not too large in general, but as the model is scaled, the number of relevant states grows too large for the simulation based approaches to work well.

The results in Table \ref{table:statesMDP} show that BRTDP\_coll explores a larger portion of the state space for almost all experiments. This is due to the different choice of $k$, so the number of steps before an exploration is broken. The higher $k$ that is implemented in BRTDP\_coll allows for exploring longer, and hence more of the state space is explored. This can be advantageous for the verification time, as in mer with x=0.0001 or for csma with N=3 K=4, but can also prevent from producing any result in time as in leader with N=6 or mer with x=0.1. 

In general one can see, that the approach of deflating works well also on MDPs and that giving guarantees is often possible without significant overhead.

\begin{table}[] \centering \small \caption{CPU time for each experiment in seconds. Models and their scaling parameters are denoted on the left, solvers in the topmost column. We compared Storm, PRISM in three different version, namely once with the hybrid engine (called PRISM\_h), once with the explicit engine (PRISM\_e) and once with interval iteration (PRISM\_ii), furthermore our approaches (BVI and BRTDP) and the collapsing based approach from \cite{atva} (BRTDP\_coll)} \label{table:timeMDP} \makebox[\linewidth]{ \begin{tabular}{|c|c||c|c||c|c|c||c|c|} \hline Model & Parameters & Storm & PRISM\_h & PRISM\_e & PRISM\_ii & BVI & BRTDP\_coll & BRTDP \\ \hline \hline
\multirow{4}{*}{firewire} 
& deadline=220 delay=36
& 162 
& 259 
& 459 
& 449 
& 468 
 & 2
 & 2
\\ \cline{2-9}
& deadline=240 delay=36
& 219 
& 453 
& 600 
& 593 
& 718 
 & 2
 & 2
\\ \cline{2-9}
& deadline=260 delay=36
& 252 
& 882 
& 745 
& X 
& X 
 & 2
 & 2
\\ \cline{2-9}
& deadline=280 delay=36
& 316 
& 751 
& X 
& X 
& X 
 & 2
 & 2
\\ \cline{2-9}
\hline
\multirow{6}{*}{wlan} 
& k=2 COL=2
& 0 
& 3 
& 3 
& 4 
& 3 
 & 2
 & 2
\\ \cline{2-9}
& k=2 COL=6
& 1 
& 4 
& 5 
& 7 
& 6 
 & 2
 & 2
\\ \cline{2-9}
& k=4 COL=2
& 4 
& 16 
& 18 
& 25 
& 22 
 & 2
 & 2
\\ \cline{2-9}
& k=4 COL=6
& 7 
& 21 
& 34 
& 35 
& 35 
 & 2
 & 2
\\ \cline{2-9}
& k=6 COL=2
& 57 
& 164 
& 703 
& 737 
& 727 
 & 2
 & 2
\\ \cline{2-9}
& k=6 COL=6
& 58 
& 179 
& 700 
& 740 
& 754 
 & 2
 & 2
\\ \cline{2-9}
\hline
\multirow{6}{*}{zeroconf} 
& K=2 N=20
& 1 
& 6 
& 5 
& 5 
& 16 
 & 3
 & 2
\\ \cline{2-9}
& K=2 N=500
& 0 
& 7 
& 5 
& 6 
& 21 
 & 3
 & 2
\\ \cline{2-9}
& K=2 N=1000
& 1 
& 7 
& 5 
& 5 
& 24 
 & 3
 & 2
\\ \cline{2-9}
& K=10 N=20
& 35 
& 167 
& 101 
& 128 
& X 
 & 3
 & 2
\\ \cline{2-9}
& K=10 N=500
& 36 
& 172 
& 101 
& 138 
& X 
 & 4
 & 2
\\ \cline{2-9}
& K=10 N=1000
& 35 
& 180 
& 100 
& 145 
& X 
 & 5
 & 2
\\ \cline{2-9}
\hline
\multirow{6}{*}{csma} 
& N=2 K=2
& 0 
& 1 
& 2 
& 2 
& 2 
 & 3
 & 2
\\ \cline{2-9}
& N=2 K=4
& 0 
& 2 
& 2 
& 2 
& 2 
 & 7
 & 7
\\ \cline{2-9}
& N=2 K=6
& 0 
& 2 
& 4 
& 4 
& 4 
 & 89
 & 86
\\ \cline{2-9}
& N=3 K=2
& 0 
& 2 
& 4 
& 4 
& 4 
 & 14
 & 22
\\ \cline{2-9}
& N=3 K=4
& 15 
& 5 
& 31 
& 32 
& 37 
 & 227
& X 
\\ \cline{2-9}
& N=3 K=6
& X 
& 58 
& X 
& X 
& X 
& X 
& X 
\\ \cline{2-9}
\hline
\multirow{4}{*}{leader} 
& N=3 
& 0 
& 1 
& 1 
& 1 
& 1 
 & 2
 & 2
\\ \cline{2-9}
& N=4 
& 0 
& 2 
& 2 
& 2 
& 2 
 & 3
 & 4
\\ \cline{2-9}
& N=5 
& 0 
& 3 
& 3 
& 3 
& 3 
 & 24
 & 13
\\ \cline{2-9}
& N=6 
& 4 
& 10 
& 9 
& 10 
& 9 
& X 
 & 46
\\ \cline{2-9}
\hline
\multirow{6}{*}{mer} 
& x=0.0001 n=1500
& 48 
& 108 
& 184 
& X 
& X 
 & 17
 & 41
\\ \cline{2-9}
& x=0.0001 n=3000
& 101 
& 220 
& X 
& X 
& X 
 & 16
 & 40
\\ \cline{2-9}
& x=0.1 n=1500
& 51 
& 150 
& 188 
& X 
& X 
& X 
 & 698
\\ \cline{2-9}
& x=0.1 n=3000
& 102 
& 291 
& X 
& X 
& X 
& X 
 & 604
\\ \cline{2-9}
\hline
\end{tabular} } \end{table}

\begin{table}[] \centering \small \caption{The number of states for each model and the number of states that the simulation based approach had to explore. Models and their scaling parameters are denoted on the left. (top is state space and simulation based approach)} \label{table:statesMDP} \makebox[\linewidth]{ \begin{tabular}{|c|c||c||c|c|} \hline Model & Parameters & \#States & BRTDP\_coll & BRTDP \\ \hline \hline
\multirow{4}{*}{firewire} 
& deadline=220 delay=36
& 10,490,495 
 & 792
 & 797
\\ \cline{2-5}
& deadline=240 delay=36
& 13,366,666 
 & 779
 & 718
\\ \cline{2-5}
& deadline=260 delay=36
& 15,255,584 
 & 791
 & 432
\\ \cline{2-5}
& deadline=280 delay=36
& 19,213,802 
 & 1,050
 & 683
\\ \cline{2-5}
\hline
\multirow{6}{*}{wlan} 
& k=2 COL=2
& 28,598 
 & 584
 & 113
\\ \cline{2-5}
& k=2 COL=6
& 107,854 
 & 676
 & 110
\\ \cline{2-5}
& k=4 COL=2
& 345,118 
 & 767
 & 112
\\ \cline{2-5}
& k=4 COL=6
& 728,990 
 & 764
 & 120
\\ \cline{2-5}
& k=6 COL=2
& 5,007,666 
 & 858
 & 125
\\ \cline{2-5}
& k=6 COL=6
& 5,007,670 
 & 691
 & 113
\\ \cline{2-5}
\hline
\multirow{6}{*}{zeroconf} 
& K=2 N=20
& 89,586 
 & 393
 & 125
\\ \cline{2-5}
& K=2 N=500
& 89,586 
 & 1,601
 & 560
\\ \cline{2-5}
& K=2 N=1000
& 89,586 
 & 1,625
 & 937
\\ \cline{2-5}
& K=10 N=20
& 3,001,911 
 & 1,161
 & 645
\\ \cline{2-5}
& K=10 N=500
& 3,001,911 
 & 3,836
 & 643
\\ \cline{2-5}
& K=10 N=1000
& 3,001,911 
 & 5,358
 & 622
\\ \cline{2-5}
\hline
\multirow{6}{*}{csma} 
& N=2 K=2
& 1,038 
 & 964
 & 966
\\ \cline{2-5}
& N=2 K=4
& 7,958 
 & 7,691
 & 7,811
\\ \cline{2-5}
& N=2 K=6
& 66,718 
 & 64,341
 & 32,929
\\ \cline{2-5}
& N=3 K=2
& 36,850 
 & 22,883
 & 26,632
\\ \cline{2-5}
& N=3 K=4
& 1,460,287 
 & 266,724
 &X
\\ \cline{2-5}
& N=3 K=6
& 84,856,004 
 & X
& X 
\\ \cline{2-5}
\hline
\multirow{4}{*}{leader} 
& N=3 
& 364 
 & 335
 & 306
\\ \cline{2-5}
& N=4 
& 3,172 
 & 2,789
 & 2,565
\\ \cline{2-5}
& N=5 
& 27,299 
 & 21,550
 & 8,420
\\ \cline{2-5}
& N=6 
& 237,656 
 & 128,593
 & 20,551
\\ \cline{2-5}
\hline
\multirow{6}{*}{mer} 
& x=0.0001 n=1500
& 8,862,064 
 & 2,603
 & 2,005
\\ \cline{2-5}
& x=0.0001 n=3000
& 17,722,564 
 & 2,632
 & 2,035
\\ \cline{2-5}
& x=0.1 n=1500
& 8,862,064 
 & X
 & 7,461
\\ \cline{2-5}
& x=0.1 n=3000
& 17,722,564 
 & X
 & 7,453
\\ \cline{2-5}
\hline
\end{tabular} } \end{table}

\subsection{Experiments on SGs}\label{sec:SGexp}
The results of the experiments on teamform, where pre-computation already solved the problem for PRISM-games and BVI, are shown in Table \ref{table:teamform}.
BRTDP performs bad on this model because a large part of the state space is relevant for convergence, namely over 60\% with N=3 and at least 50\% for N=4. Also the number of relevant states is very large.

\begin{table}[H] \centering \small \caption{Verification time for teamform in seconds. An X denotes that the computation did not finish within the time limit. } \label{table:teamform} \makebox[\linewidth]{ \begin{tabular}{|c|c||c|c|c|} \hline Model & Parameters &  PRISM& BVI & BRTDP \\ \hline \hline
\multirow{2}{*}{teamform} 
& N=3 
& 3 
& 3 
 & 139
\\ \cline{2-5}
& N=4 
& 9 
& 10
& X 
\\ \cline{2-5}
\hline
\end{tabular} } \end{table}

\newpage
\section{Proofs}

\subsection{Proof of Lemma \ref{lem:illu}}

\setcounter{lemma}{0}

\begin{lemma}\label{lem:illu} 
Let $T$ be an EC of an \SG.
Then for every $m$ such that
$\exit<\val>[\Box](T) \leq m \leq \exit<\val>[\circ](T)$,
there is a solution to the Bellman equations, which on $T$ is constant and equal to $m$.
\end{lemma}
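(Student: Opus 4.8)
The plan is to exhibit an explicit Bellman solution $f$ that is constant $m$ on $T$. Since $\target$ and $\sink$ are absorbing, any EC meeting $\{\target,\sink\}$ is the singleton $\{\target\}$ or $\{\sink\}$; we may therefore assume $T\cap\{\target,\sink\}=\emptyset$. Let $g\colon\states\to[0,1]$ be the value function of the auxiliary game obtained from the SG by turning every state of $T$ into an absorbing state of payoff $m$ (concretely, redirecting all actions entering $T$ to a fresh sink that reaches $\target$ with probability exactly $m$). Define $f(\state)=m$ for $\state\in T$ and $f(\state)=g(\state)$ for $\state\notin T$. It then remains to verify that $f$ satisfies the Bellman equations~(\ref{eq:Vs}) at every state.

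The easy part is outside $T$. For $\state\in\states\setminus(T\cup\{\target,\sink\})$ and any action $\action$, the $f$-value $f(\state,\action)=\sum_{\state'\notin T}\trans(\state,\action,\state')\,g(\state')+\big(\sum_{\state'\in T}\trans(\state,\action,\state')\big)\,m$ coincides with the action value of $\action$ in the auxiliary game, since landing in $T$ is worth $m$ there as well. Hence the optimisation in~(\ref{eq:Vs}) at $\state$ is exactly the one defining $g(\state)$, so Bellman holds because $g$ is a value function; the equations at $\target$ and $\sink$ hold by $f=g$ there.

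The substance is inside $T$. First, by the definition of an EC every $\state\in T$ has an available action $\action$ with $\post(\state,\action)\subseteq T$ (a witness ``staying'' action, a self-loop if $T$ is a singleton); its $f$-value is $\sum_{\state'\in T}\trans(\state,\action,\state')\,m=m$. Thus for a Maximizer state the maximum in~(\ref{eq:Vs}) is $\ge m$, and for a Minimizer state the minimum is $\le m$. To obtain equality I must prove the complementary domination: every leaving action of a Maximizer state has $f$-value $\le m$, and every leaving action of a Minimizer state has $f$-value $\ge m$. This is exactly where the hypothesis $\exit<\val>[\Box](T)\le m\le\exit<\val>[\circ](T)$ enters: the thresholds bound the best $T$-leaving action of Maximizer from above by $m$ and the best of Minimizer from below by $m$, so neither player can profit from leaving, the extrema are realised on the staying actions, and $f(\state)=m$.

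The main obstacle is that a leaving action need not leave $T$ with probability one: its successor distribution may place mass both inside $T$ (where $f=m$) and outside (where $f=g$), so its $f$-value is a blend $p_{\mathrm{in}}\,m+\sum_{\state'\notin T}\trans(\state,\action,\state')\,g(\state')$ rather than the plain $\val$-based action value occurring in $\exit<\val>$. The delicate step is thus to compare this blended quantity against $m$; here I would exploit EC connectivity (every state of $T$ can reach, while staying inside $T$, any state from which the relevant exit is available) together with the definitions of $\exit<\val>[\Box]$ and $\exit<\val>[\circ]$ to argue that the probability recirculated back into $T$, being worth exactly $m$, can push neither a Maximizer leaving action above $m$ nor a Minimizer leaving action below $m$. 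The degenerate cases are absorbed by the conventions $\max_\emptyset=0$ and $\min_\emptyset=1$: if Maximizer has no $T$-leaving action then $\exit<\val>[\Box](T)=0\le m$ holds vacuously, and dually for Minimizer.
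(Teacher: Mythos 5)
Your overall architecture is the same as the paper's: set $f\equiv m$ on $T$, note that every state of $T$ has a staying action whose $f$-value is exactly $m$, and invoke the hypothesis $\exit<\val>[\Box](T)\le m\le\exit<\val>[\circ](T)$ to argue that no \leaving\ action beats $m$ for the respective player. Your construction of $f$ outside $T$ as the value $g$ of the auxiliary game with $T$ made absorbing at payoff $m$ is in fact \emph{more} careful than the paper, whose proof only ever checks the Bellman equations at states of $T$ and never specifies $f$ elsewhere.

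The problem is the step you yourself isolate as ``the delicate step'': that a \leaving\ action with mass both inside and outside $T$ is dominated by $m$. That is where the entire content of the lemma sits, and your proposal does not prove it --- it only announces that EC connectivity should handle the recirculated mass. It does not, and no choice of $f$ outside $T$ can help. Concretely, take $T=\set{p,q,r}$ with $p$ a Minimizer state with staying moves to $q$ and to $r$, $q$ a Maximizer state with a staying move to $p$ and a \leaving\ action $e$ going to $\target$ with probability $\tfrac12$ and back to $p$ with probability $\tfrac12$, and $r$ a Maximizer state with a staying move to $p$ and a \leaving\ action of value $0.1$. Then $\val(p)=\val(r)=0.1$, $\val(q)=\val(q,e)=0.55$, so $\exit<\val>[\Box](T)=0.55$ and $\exit<\val>[\circ](T)=\min_\emptyset=1$. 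For $m=0.7$ the hypothesis holds, yet any candidate solution has $f(\target)=1$ forced, hence $f(q,e)=\tfrac12\cdot 1+\tfrac12\cdot m=0.85>m$, and the maximum at $q$ cannot equal $m$. The precise failure is the one you gesture at: the hypothesis bounds $\val(q,e)=\tfrac12+\tfrac12\val(p)$, where the in-$T$ successor contributes its \emph{true} value $\val(p)\le m$, whereas $f(q,e)$ charges that successor the inflated amount $m$, so $f(q,e)-\val(q,e)=\tfrac12\bigl(m-\val(p)\bigr)$ can lift the \leaving\ action above $m$. For comparison, the paper's own proof closes this step only by asserting $m\ge\exit<\ub>[\Box](T)$ ``by assumption,'' i.e.\ by silently reading the hypothesis as a bound on the $f$-value of the best exit rather than on its $\val$-value; under that stronger reading both arguments go through (and the example above is excluded), but it is not what the stated hypothesis provides. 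So the gap you flagged is genuine, and the route you sketch for closing it cannot work as stated.
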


\begin{proof}
Let $T$ be an EC and $m$ a number in $[0, 1]$ satisfying $\exit<\val>[\Box](T) \leq m \leq \exit<\val>[\circ](T)$. 
Furthermore, let $\ub$ be a function where for all $\state \in T: \ub(\state)=m$. 

We now show that $\ub$ is a fixpoint of the Bellman equations, i.e. for each 
$\state \in T_\Box: \ub(\state) =  \max_{\action \in \Av(\state)}\ub(\state,\action) = m$ and for each $\state \in T_\circ: \ub(\state) =  \min_{\action \in \Av(\state)}\ub(\state,\action) = m$.
\vspace{2em}

\noindent \textbf{Claim 1:} For every state $\state \in T$ there is an action $\action \in \Av(\state)$ such that $\ub(\state,\action)=m$.
\vspace{-1em}
\begin{enumerate}
\item Let $\state \in T$. 
\item Since $T$ is an EC, there exists an $\action \in \Av(\state): (\state,\action) \stays T$. Let $\action$ be such a staying action.
\item \label{val_a} From this we know $\post(\state,\action) \subseteq T$. Since by assumption for all $\state' \in T: \ub(\state')=m$, it also holds that
	\[ \ub(\state,\action) = \sum_{\state' \in \post(\state,\action)} \trans(\state,\action)(\state') \cdot \ub(\state') =\sum_{\state' \in \post(\state,\action)} \trans(\state,\action)(\state') \cdot m = m\]
\end{enumerate}

It remains to show that $\action$ actually is the action that is used to compute the upper bound. For that we make a case distinction on the player that $\state$ belongs to.
\begin{enumerate}
	\item $\state \in T_\Box$: \\
	Since $\action$ is an arbitrary staying action, it suffices to show that for all actions $\action' \in \Av(\state): (\state,\action') \leaves\ T \implies \ub(\state,\action) \geq \ub(\state,\action')$. This is done by the following chain of equations:
		\begin{align*}
		\ub(\state,\action) &= m \tag{by Claim 1}\\
			&\geq \exit<\ub>[\Box](T) \tag{by assumption} \\
			&\geq \max_{\action' \in \Av(\state): (\state,\action') \leaves T} \ub(\state,\action') \tag{since $\state \in T_\Box$} \\
			&\geq  \ub(\state,\action') \tag{since $(\state,\action') \leaves T$}
		\end{align*}
	
	\item $\state \in T_\circ$:\\
	Since $\action$ is an arbitrary staying action, it suffices to show that for all actions $\action' \in \Av(\state): (\state,\action') \leaves\ T \implies \ub(\state,\action) \leq \ub(\state,\action')$. This is done by the following chain of equations:
		\begin{align*}
		\ub(\state,\action) &= m \tag{by Claim 1}\\
			&\leq \exit<\ub>[\circ](T) \tag{by assumption} \\
			&\leq \min_{\action' \in \Av(\state): (\state,\action') \leaves T} \ub(\state,\action') \tag{since $\state \in T_\circ$}\\
			&\leq \ub(\state,\action') \tag{since $(\state,\action') \leaves T$}
		\end{align*}
\end{enumerate}
\qed
\end{proof}

\subsection{Proof of Theorem \ref{lem:convBVIwo}}

\setcounter{theorem}{0}

\begin{theorem}[BVI converges without $\CEC$s]
If the $\SG~\G$ contains no $\CEC$s, then
\[ \forall \state \in \states: \gub[\ast](\state) - \val(\state) = 0,\] 
i.e. value iteration from above converges to the value in the limit.
\end{theorem}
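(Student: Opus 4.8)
The plan is to argue by contradiction, following the scheme of the sketch but making its two delicate steps (extracting an end component and locating an optimal \leaving\ action) precise for the two-player setting. Recall that $\gub[\ast]$ solves the Bellman equations and satisfies $\val\leq\gub[\ast]\leq 1$. Define the gap $\phi(\state):=\gub[\ast](\state)-\val(\state)\geq 0$ and suppose, for contradiction, that $\Phi:=\max_{\state\in\states}\phi(\state)>0$. Let $D:=\{\state\mid\phi(\state)=\Phi\}$ be the set of states of maximal gap. Since $\phi(\target)=\phi(\sink)=0<\Phi$, neither sink lies in $D$.

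First I would establish a \emph{forcing} property: every $\state\in D$ admits an optimal action all of whose successors lie in $D$. For $\state\in\states<\Box>$ pick a $\gub[\ast]$-optimal action $\action$, so $\gub[\ast](\state)=\gub[\ast](\state,\action)$; using $\val(\state)\geq\val(\state,\action)$, the chain $\Phi=\phi(\state)\leq\gub[\ast](\state,\action)-\val(\state,\action)=\sum_{\state'}\trans(\state,\action,\state')\phi(\state')\leq\Phi$ collapses to equalities, forcing $\phi(\state')=\Phi$ and hence $\post(\state,\action)\subseteq D$. For $\state\in\states<\circ>$ the symmetric chain with a $\val$-optimal action works. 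Collect these witnessing actions into a set $B$.

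The next step isolates an end component on which both functions are constant. Let $v^\ast:=\max_{\state\in D}\val(\state)$ and $E:=\{\state\in D\mid\val(\state)=v^\ast\}$; an averaging argument identical to the forcing step shows the $B$-actions of states in $E$ keep all successors inside $E$, so $(E,B)$ is a finite non-blocking sub-game with $\val\equiv v^\ast$ and $\gub[\ast]\equiv v^\ast+\Phi$ on $E$. Taking a \emph{bottom} MEC $T$ of $(E,B)$ (one that no $B$-action leaves, which exists by finiteness) yields an EC $T$ of $\G$ with $T\cap\{\target,\sink\}=\emptyset$; by hypothesis $T$ is not a $\CEC$, so $\exit<\val>[\Box](T)\geq\exit<\val>[\circ](T)$. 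Since $\val\equiv v^\ast$ on $T$, every \leaving\ action of Maximizer has value $\leq v^\ast$ and every \leaving\ action of Minimizer value $\geq v^\ast$, giving $\exit<\val>[\Box](T)\leq v^\ast\leq\exit<\val>[\circ](T)$; combined with the non-$\CEC$ inequality this pins $\exit<\val>[\Box](T)=\exit<\val>[\circ](T)=v^\ast$.

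Finally I would close the contradiction. If Minimizer has a \leaving\ action in $T$, then $\exit<\val>[\circ](T)=v^\ast$ is realized by some $(\state',\action')$ with $\state'\in T_\circ$ and $\val(\state',\action')=v^\ast=\val(\state')$; thus $\action'$ is $\val$-optimal, so $\action'\in B$, yet it leaves $T$, contradicting that $T$ is a bottom MEC. If Minimizer has no \leaving\ action in $T$, the convention $\min_\emptyset=1$ forces $v^\ast=1$, whence $\gub[\ast]\equiv 1+\Phi>1$ on $T$, contradicting $\gub[\ast]\leq 1$. Either way $\Phi=0$, i.e. $\gub[\ast]=\val$. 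The main obstacle is exactly this last structural interplay: unlike in MDPs, Maximizer's witnessing actions are $\gub[\ast]$-optimal while Minimizer's are $\val$-optimal, so the \leaving\ action produced by the non-$\CEC$ condition must be matched to the \emph{right} player (here Minimizer) to land in $B$ and violate the bottom-MEC property; getting this pairing right, and disposing of the $v^\ast=1$ corner through the bound $\gub[\ast]\leq 1$, is where the real work lies.
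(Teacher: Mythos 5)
Your plan is correct, and its skeleton matches the paper's: assume a positive gap, look at the set $D$ of maximal-gap states, show optimal actions cannot leave it, extract an end component, and invoke the no-$\CEC$ hypothesis. The execution of the endgame, however, is genuinely different and arguably cleaner. The paper works with Maximizer: it takes the set of maximal-$\gub[\ast]$ states inside the maximal-difference set, shows the best Maximizer \leaving\ action must realize the $\max$ in the Bellman equation (otherwise a $\CEC$ arises), and then contradicts maximality because that action has a successor \emph{outside} the maximal-difference set with strictly smaller difference. You instead refine $D$ by maximal $\val$ (equivalently, maximal $\gub[\ast]$, since the gap is constant on $D$ -- so your $E$ is the paper's $Y$), pass to a bottom EC $T$ of the optimal-action subgraph, pin down $\exit<\val>[\Box](T)=\exit<\val>[\circ](T)=v^\ast$ from the no-$\CEC$ hypothesis, and close via Minimizer: her $\val$-optimal \leaving\ action must keep all successors in $T$ by your forcing lemma, contradicting that it leaves, while the ``no Minimizer exit'' corner is killed by $v^\ast=1$ versus $\gub[\ast]\leq 1$. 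What your route buys is that you never need the paper's somewhat delicate argument that the best \leaving\ action is Bellman-optimal, nor its multi-way case split on whether the maximal-difference set is (a union of) end components; what it costs is the extra bookkeeping of the sub-game $(E,B)$ and the bottom-EC extraction. One point to make explicit when writing this up: your final step ``$\action'$ is $\val$-optimal, so $\action'\in B$'' requires $B$ to contain \emph{all} $\gub[\ast]$-optimal actions of Maximizer states and \emph{all} $\val$-optimal actions of Minimizer states in $E$, not just one witness per state. This is harmless -- your forcing and averaging arguments apply verbatim to every optimal action, so the enlarged $B$ still keeps $E$ closed -- but as phrased (``admits an optimal action \dots collect these witnessing actions'') the set $B$ could be read as a single choice per state, under which the membership $\action'\in B$ would not follow.
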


\begin{proof}
We denote the difference in a state by $\diff(\state) \eqdef \gub[\ast](\state) - \val(\state) $.
\begin{enumerate}
\item \label{firstAssm}Assume for contradiction there exists a
    state~$\state \in \states$ s.t. $\diff(\state)>0$.
 \item Let $X \eqdef \set{\state \mid \state \in \states \land
      \diff(\state) =\displaystyle{\max_{\state \in \states}}
      \diff(s)}$ denote the set of all states with maximal
    difference. By definition, we require
    $\lb(\target)=\gub[\ast](\target)=1$ for every $\target \in \F$. Hence,
    $\diff(\target)=0$ and thus, $\F \cap X = \emptyset$. Analogously,
    it follows that the sink state~$\sink \notin X$.
\item \label{claim1} There exists a state $\state<\ell> \in X$ and an
      action~$\action<\ell> \in \Av(\state)$ such that
      $(\state<\ell>,\action<\ell>)$ exits $X$ and it holds:
    \begin{itemize}
    \item (Condition 1) $\gub[\ast](\state<\ell>) =
      \gub[\ast](\state<\ell>,\action<\ell>)$ if $\state \in
      \states<\Box>$
    \item (Condition 2) $\val(\state<\ell>) =
      \val(\state<\ell>,\action<\ell>)$ if $\state \in
      \states<\circ>$
    \end{itemize}
    For proving this statement we distinguish two cases:
  	\begin{enumerate}
	\item \emph{There exists no set of actions~$\actions'$ such
            that $X$ is an end component.} We distinguish two more
          cases.
          \begin{enumerate}
          \item \emph{There exists a partition $X = X_1 \strictunion
              \dots \strictunion X_n$ such that for every $1 \leq i
              \leq n$ there exists a set of action~$\action<i>$ such
              that $X_i$ is an end component.} We apply the proof of
            the second case (starting in Step \ref{case2}) to every
            end component.
          \item \emph{There exists no partition $X = X_1 \strictunion
              \dots \strictunion X_n$ such that for every $1 \leq i
              \leq n$ there exists a set of action~$\action<i>$ such
              that $X_i$ is an end component.} Hence, there exists a
            state~$\state \in X$, which is not part of any end
            component in $X$. Thus, for all actions $\action \in
            \Av(\state)$, it holds $\post(\state,\action)
            \not\subseteq
            X$.\\
            To find a suitable pair $(\state<\ell>,\action<\ell>)$, we
            set $\state<\ell> = \state$ and choose an action
            $\action<\ell>$ such that
            \[ \action<\ell> = \begin{cases}
              \argmax_{\action\in \Av(\state<\ell>)} \gub[\ast](\state<\ell>,\action) &\textnormal{if } \state<\ell> \in X_\Box  \\
              \argmin_{\action\in \Av(\state<\ell>)}
              \val(\state,\action) &\textnormal{if } \state<\ell> \in
              X_\circ,
            \end{cases} \] because all actions in $\state$ exit $X$.
          \end{enumerate}

	\item \label{case2} \emph{There exists a set of
            actions~$\actions'$ such that $X$ is an end component.}
          \begin{enumerate}
          \item $X$ cannot be a $\CEC$, since the game~$\G$ does not
            have any $\CEC$s, which do not contain the target
            state~$\target$ or the sink~$\sink$. Hence, there
            exists a pair $(\state,\action)$ such that $\state \in
            X_\Box$ and $\action \in \Av(\state)$
            s.t. $(\state,\action)$ exits $X$\footnote{As $X$ is not
              a BEC, there must exist both a maximizer and a minimizer
              state with leaving actions. We choose to work with a
              maximizing state here.}. We additionally
            require \[(\state<\ell>,\action<\ell>) =
            \argmax_{\substack{\state' \in X_\Box, \action<\state'>
                \in \Av(\state'): \\(\state',\action<\state'>) \leaves
                X}} \gub[\ast](\state',\action<\state'>).\]
          \item We still need to prove that \[
            \gub[\ast](\state<\ell>,\action<\ell>) =
            \gub[\ast](\state<\ell>) = \max_{\action<m> \in
              \Av(\state<\ell>)} \gub[\ast](\state<\ell>, a_m),\]
            i.e. $\action<\ell>$ maximizes the upper bound of
            $\state<\ell>$.  Assume for contradiction that there is an
            $\action' \in \Av(\state<\ell>)$
            s.t. $\gub[\ast](\state<\ell>,\action') >
            \gub[\ast](\state<\ell>,\action<\ell>)$ and
            $(\state<\ell>,\action')$ stays in $X$. In addition, let
            $m$ be the maximal upper bound occurring in $X$,
            i.e. $\displaystyle m \eqdef \max_{s \in X}
            \gub[\ast](\state)$. \label{subAssmContr} It holds $m >
            \gub[\ast](\state<\ell>,\action<\ell>)$ because
            $\gub[\ast](\state<\ell>) >
            \gub[\ast](\state<\ell>,\action<\ell>)$ and thus,
            $\gub[\ast](\state<\ell>,\action<\ell>)$ cannot be the
            maximal upper bound.  Let $\state \in X$ be a state such
            that $\gub[\ast](\state) = m$. \label{letStateHavem}
        
            \begin{itemize}
            \item If $\state \in X_\circ$, the upper
              bound~$\gub[\ast](\state)$ is defined as the minimal
              upper bound~$\gub[\ast](\state,\action)$ for any
              action~$\action \in \Av(s)$. \label{claim2a} Thus, if
              $\state \in X_\circ$, for all actions $\action \in
              \Av(\state)$, it holds $\gub[\ast](\state,\action) \geq
              m$, since $\gub[\ast](\state)=m$.
          
            \item If $\state \in X_\Box$, there exists an action
              $\action \in \Av(\state)$
              s.t. $\gub[\ast](\state,\action) = m$ and
              $(\state,\action)$ stays in $X$. It holds
              $\gub[\ast](\state) = m$. In addition, we know that $m >
              \gub[\ast](\state',\action')$ for any state~$\state' \in
              X_\Box$ and any action~$\action' \in \Av(\state')$ such
              that $(\state',\action')$ exits $X$ because
              $\gub[\ast](\state<\ell>,\action<\ell>)$ is the largest
              upper bound of a leaving action and still smaller than
              $m$, which is the value of the state. Hence,
              $\gub[\ast](\state)=\gub[\ast](\state,\action)$ for some
              action~\mbox{$\action \in \Av(\state)$}, which stays in
              $X$.
            \end{itemize}
            Let $Y \eqdef \set{\state \mid \state \in X \land
              \gub[\ast](\state)=m}$ be the set of states, which have
            a maximal upper bound in $X$. The previous case
            distinction shows that for every state $\state \in Y$,
            there exists an action~$\action \in \Av(\state)$ such that
            the pair $(\state,\action)$ stays in $X$. We now show that
            there there also exists an action~$\action$ staying in~$Y$.
            \begin{itemize}
            \item To arrive at a contradiction, assume there exists no
              such action, i.e. for all actions~$\action \in
              \Av(\states)$, it holds $(\state,\action)$ exits $Y$.
            \item Hence, $\post(\state,\action) \not \subseteq Y$,
              i.e. there exists a state $\state' \in
              \post(\state,\action) \not \in Y$. It holds
              $\gub[\ast](\state')<m$ by definition of $Y$.
        \item Thus,
          \[ \begin{array}{llr} \gub[\ast](\state,\action) & =
            \displaystyle \sum_{\state[\ast] \in
              \post(\state,\action)}
            \trans(\state,\action)(\state[\ast]) \cdot
            \gub[\ast](\state[\ast]) \\ & = \big(\displaystyle
            \sum_{\state[\ast] \in
              \post(\state,\action)\setminus\set{\state'}}
            \underbrace{\trans(\state,\action)(\state[\ast])}_{\leq 1}
            \cdot \underbrace{\gub[\ast](\state[\ast])}_{\leq
              \gub[\ast](\state)} \big) +
            \underbrace{\trans(\state,\action)(\state')}_{\leq 1}
            \cdot \underbrace{\gub[\ast](\state')}_{<
              \gub[\ast](\state)} & < \gub[\ast](\state),
          \end{array}
          \] which is a contradiction.
        \end{itemize}
      			
        Hence, there exists some set $Z \subseteq Y$ s.t. $Z$ is an
        end component.
      
        By definition of~$Y$ and $Z \subseteq Y$, every state~$\state
        \in Z$ has upper bound~$m$, i.e. $\gub[\ast](\state)=m$.  By
        definition of $X$ and $Z \subseteq Y \subseteq X$, every
        state~$\state \in Z$ has maximal difference. The maximal
        difference~$d$ is defined as $d \coloneqq \diff(\state) =
        \max_{\state[\prime] \in \states} \diff(\state[\prime]) $. As
        elements of the set~$Z$ both have a maximal upper bound and
        the maximal difference, it holds $\val(\state) = m - d$ for
        every $\state \in Z$ (the value is the the upper bound minus
        the difference between the upper bound and the value).

        \begin{itemize}
        \item For all states $\state \in Z_\Box$ and all
          actions~$\action \in \Av(\state)$, it holds
          \[\val(\state,\action) \leq m-d = \val(\state)\] by
          definition of the value and the previous step.
          Hence, \[ \displaystyle \max_{\substack{s \in Z_\Box, a \in \Av(t):\\
              (\state,\action) \leaves Z}} \val(\state,\action) \leq m
          - d. \]
            
        \item For every state $\state \in Z_\circ$ and every action
          $\action \in \Av(\state)$ s.t. $(\state,\action)$ exits
          $Z$, it holds $(\state,\action)$ exits $X$. (For every
          $(\state,\action) \in X_\circ$, it holds
          $\gub[\ast](\state,\action) \geq m$. Either all states
          $\state' \in \post(\state,\action)$ are in $Z$ (and thus,
          have value $m$), which is a contradiction to
          $(\state,\action)$ is leaving $Z$, or there must exist some
          $\state' \in \post(\state,\action)$, which is not in
          $Z$. Assume, it is in $X$. Then, its value must be smaller
          than $m$. Hence, there must be state~$\state[\prime\prime]$,
          which cannot be in $X$, s.t. $\gub[\ast](\state[\prime\prime])
          > \gub[\ast](\state)=m$. Hence, if an state-action-pair is
          leaving $Z$, it is also leaving $X$).
        \end{itemize}
        
        For every state $\state \in Z_\circ$ and every action $\action
        \in \Av(\state)$ s.t. $(\state,\action)$ exits $Z$, it holds
        $\val(\state,\action) \geq m - d$.  Additionally,
        $(\state,\action)$ exits $X$, and hence, \[
        \gub[\ast](\state,\action) - \val(\state,\action) < d. \]
        Since $\displaystyle d > \gub[\ast](\state,\action) -
        \val(\state,\action) \geq m - \val(\state,\action)$, it holds
        that \[\val(\state,\action) > m - d.\] This implies that for
        every state~$\state \in Z_\circ$, and every action~$\action
        \in \Av(\state)$ s.t. $(\state,\action)$ exits $Z$, it
        holds \[
        \val(\state,\action) > \max_{\substack{\state[\ast] \in Z_\Box, \action[\ast] \in \Av(t):\\
            (\state[\ast],\action[\ast]) \leaves Z}}
        \val(\state[\ast],\action[\ast]). \] Thus, every exiting
        action of the minimizing player has a higher value than the
        best exiting action of the maximizing player, which means that
        the end component is bloated. Hence, $Z$ is a
        BEC, which is a contradiction to the assumption that the game
        does not contain any bloated end component. Thus, the
        assumption was wrong, and it holds that $\action<\ell>$
        maximizes the upper bound of $\state<\ell>$.
      \end{enumerate}
    \end{enumerate}

  \item \label{finallyDeriveContr}We now use Step~\ref{claim1} to arrive at a contradiction to
    the assumption in Step~\ref{firstAssm} as follows: If
    $\state<\ell> \in X_\Box$, it holds $\gub[\ast](\state<\ell>) =
    \gub[\ast](\state<\ell>,\action<\ell>)$ by Step~\ref{claim1} and
    $\val(\state<\ell>) \geq \val(\state<\ell>,\action<\ell>)$ by
    Equation~\ref{eq:Vs} ($\state<\ell>$ is a \emph{maximizing}
    state).\label{deriveBox} If $\state<\ell> \in X_\circ$, it holds
    $\val(\state<\ell>) = \val(\state<\ell>,\action<\ell>)$ by
    Step~\ref{claim1} and $\gub[\ast](\state<\ell>) \leq
    \gub[\ast](\state<\ell>,\action<\ell>)$ ($\state<\ell>$ is a
    \emph{minimizing} state).\label{deriveCirc} Hence, we know
  \begin{align*}
    \diff(\state<\ell>)
    &= \gub[\ast](\state<\ell>) - \val(\state<\ell>) \tag{by definition of $\diff$}\\
    &\leq \underbrace{\gub[\ast](\state<\ell>,\action<\ell>)}_{\geq
      \gub[\ast](\state<\ell>)} -
    \underbrace{\val(\state<\ell>,\action<\ell>)}_{\leq
      \val(\state<\ell>)} \tag{by Step \ref{deriveBox}} \\
    &= \diff(\state<\ell>,\action<\ell>) \tag{by definition of $\diff$}\\
    &< \diff(\state<\ell>) \tag{$\action<\ell>$ exits $X$}
    \end{align*} 
    We arrive at a contradiction. Thus, our assumption was wrong and
    it holds for all states $\state \in \states: \diff(\state)=0$.\qed
  \end{enumerate}
\end{proof}

\subsection{Proof of Lemma \ref{lem:find}}

\setcounter{lemma}{1}

\begin{lemma}[Correctness of Algorithm \ref{alg:FIND}]
	  $T \in$ \emph{\FIND}$(\val)$ if and only if $T$ is an MSEC.
\end{lemma}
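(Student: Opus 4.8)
The plan is to prove $\FIND(\val)=\MECs(\G<[\Av/\Av']>)$ describes exactly the MSECs by matching the maximal ECs of the reduced game $\widehat\G:=\G<[\Av/\Av']>$ with the maximal SECs of $\G$. First I would record the structural effect of Line~\ref{line:findRemove}: for a Minimizer state $\state$ the Bellman equation gives $\val(\state,\action)\ge\val(\state)$ for every $\action$, so deleting the actions with $\val(\state,\action)>\val(\state)$ leaves precisely the $\val$-optimal ones, while Maximizer's actions are untouched. Hence in $\widehat\G$ every available Minimizer action satisfies $\val(\state,\action)=\val(\state)$, and no action an optimal Minimizer would ever use is discarded. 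The whole statement then reduces to two claims, together with the standard fact that every EC lies inside some MEC. \emph{Claim A:} every SEC $T$ of $\G$ is an EC of $\widehat\G$. \emph{Claim B:} every MEC $T$ of $\widehat\G$ is a SEC of $\G$.

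Claim A is the easy direction. In a SEC all states have value $\exit<\val>[\Box](T)=:v$, so any action that stays inside $T$ has all its successors valued $v$ and therefore value $v$; for a Minimizer state this action is thus optimal and survives in $\widehat\G$, and Maximizer's staying actions survive trivially. Consequently the EC-witness of $T$ (which consists of staying actions) still witnesses $T$ as an EC of $\widehat\G$. Granting A and B, the biconditional follows purely by maximality bookkeeping: if $T\in\FIND(\val)$ then $T$ is a MEC of $\widehat\G$, hence a SEC by B, and it is a \emph{maximal} SEC because a strictly larger SEC would by A be a strictly larger EC of $\widehat\G$, contradicting MEC-maximality; so $T$ is an MSEC. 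Conversely, if $T$ is an MSEC then by A it is an EC of $\widehat\G$, and it is maximal because any larger EC of $\widehat\G$ sits inside a MEC of $\widehat\G$, which by B is a SEC strictly containing $T$ — contradicting MSEC-maximality; so $T\in\FIND(\val)$.

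For Claim B I would exploit that the MEC decomposition ignores the player labelling, so $T$ is also a MEC of the MDP $\widehat\Mdp$ obtained from $\widehat\G$ by declaring every state to be Maximizer's. I would first check that $\val$ is a fixpoint of $\widehat\Mdp$'s Bellman operator: at Maximizer states this is the usual equation, and at former Minimizer states all available actions have value $\val(\state)$, so $\max$ and $\min$ agree. Since the value of $\widehat\Mdp$ is the \emph{least} fixpoint it is therefore $\le\val$; and since handing all states to Maximizer can only raise the value, whereas removing Minimizer's strictly suboptimal actions leaves $\val$ unchanged, it is also $\ge\val$. Hence the value of $\widehat\Mdp$ equals $\val$, and by the standard fact that all states of a MEC of an MDP share one value, $\val$ is constant on $T$, say equal to $c$. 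It remains to identify $c$ with $v=\exit<\val>[\Box](T)$. The bound $c\ge v$ is immediate: the Maximizer state realizing the best exit lies in $T$, and its value is at least that of its exiting action, namely $v$ (if Maximizer has no exit then $v=0\le c$). For $c\le v$ I would use a confinement argument: if $c>v$, then every $\val$-optimal Maximizer action inside $T$ must stay in $T$ (an exiting one would witness $\exit<\val>[\Box](T)\ge c>v$), and every Minimizer state of $T$ has an optimal staying action because $T$ is an EC of $\widehat\G$. Letting Maximizer follow a memoryless optimal strategy (which then never leaves $T$) against the Minimizer strategy that always stays, the induced play never leaves $T$ and so never reaches $\target$, giving reachability $0$; but the optimal strategy guarantees value $c>0$ — a contradiction. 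Thus $c=v$ and $T$ is a SEC.

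The main obstacle is Claim B, and precisely the step of pinning the constant MEC-value $c$ to $\exit<\val>[\Box](T)$ rather than to a larger spurious value: by Lemma~\ref{lem:illu} the Bellman equations admit constant solutions on an EC anywhere between $\exit<\val>[\Box](T)$ and $\exit<\val>[\circ](T)$, so the proof must genuinely use that $\val$ is the \emph{least} fixpoint. This enters twice — once in showing that $\val$ is also the value of $\widehat\Mdp$ (least-fixpoint versus monotonicity), and once in the confinement argument ruling out $c>v$. A minor point to dispatch is that the degenerate ECs $\{\target\}$ and $\{\sink\}$ are not SECs in the strict sense (there $\exit<\val>[\Box]$ differs from the state value); these are excluded from the intended reading of MSEC, consistently with Theorem~\ref{lem:convBVIwo}.
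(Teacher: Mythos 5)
Your proof is correct. It shares the paper's overall architecture --- both directions are reduced to comparing ECs of $\G$ with ECs of the action-restricted game $\G<[\Av / \Av']>$, and your Claim~A coincides with the paper's ``only if'' direction --- but the hard step, showing that a MEC $T$ of the restricted game satisfies $\val(\state)=\exit<\val>[\Box](T)$ for all $\state\in T$, is carried out by a genuinely different argument. The paper fixes an exit state $e$ and repeatedly unfolds the Bellman equations along paths towards $e$, arguing that the coefficients of the non-exit terms vanish in the limit; as written this is informal and tacitly assumes the unfolding converges. You instead (i) unify the players into an MDP, check that $\val$ remains a fixpoint there, sandwich it against the least fixpoint to identify the MDP's value with $\val$, and invoke the standard fact that an MDP's value is constant on a MEC; then (ii) pin that constant $c$ down to $\exit<\val>[\Box](T)$ by a confinement argument: if $c$ exceeded the best Maximizer exit, any memoryless optimal Maximizer strategy would play only staying actions on $T$ (optimal memoryless strategies use only locally $\val$-optimal actions), so against the always-staying Minimizer the target is never reached, contradicting $c>0$. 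This makes explicit exactly where the least-fixpoint property is needed to rule out the spurious constant solutions of Lemma~\ref{lem:illu}, a point the paper's unfolding argument leaves implicit. You also correctly flag that $\set{\target}$ belongs to $\FIND(\val)$ yet fails the literal SEC condition (its value is $1$ while its best Maximizer exit is $0$, the maximum over the empty set), an edge case the paper's proof silently ignores; aside from this intended exclusion the two proofs establish the same equivalence, and your route is the more rigorous of the two.
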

\begin{proof}
We prove the equivalence by showing each direction separately.
\begin{itemize}
\item  $T \in$ \FIND$(\val)$ $\implies$ $T$ is an MSEC
\begin{enumerate}
\item Let $T \in$ \FIND$(\val)$.
So $T$ is a MEC of the game where the mapping of available actions $X \eqdef \{(\state,\{\action\in\Av(\state) \mid \val(\state,\action) > \val(\state)\}) \mid \state \in \states<\circ>\}$ has been removed. 
\item Hence for all $\state \in T_\circ, \action \in \Av'(\state): \val(\state,\action) = \val(\state)$. \label{step:allCircEqual}

\item We now show that for all $\state \in T: \val(\state) = \exit<\val>[\Box](T)$.
\begin{itemize}
\item If there is no exit for player $\Box$, then $\exit<\val>[\Box](T) = 0$, and all states in $T$ have the value 0, since player $\circ$ can force the game to stay inside the EC forever.
\item If there is an \leaving\ state $e$ for player $\Box$, then also from each state in $T$ there is a path using only states in $T$ to $e$.
For each state $\state \in T$ we can compute $\val(\state)$ by recursively applying the Bellman equations. In each application, we choose an action that leads us closer to $e$, i.e. for $t$ being the next state on the path to $e$, choose $\action$ such that $\val(\state,\action) = \trans(\state,\action,t) \cdot \val(t) + \sum_{\state' \in \post(\state,\action) \setminus \{t\}} \trans(\state,\action,\state') \cdot \val(\state') $. Since for each state in $T$ we have a path to $e$, we can replace each $\val(\state')$ in this way. By repeating this and thereby multiplying all the probabilities $\trans(\state,\action,t)$, the factor in front of the terms that do not contain $\val(e)$ approach 0, and thus we get $\val(\state,\action)=\val(e)$ for an arbitrary $\state \in T$ and a staying action $\action$. Since for the minimizer all actions have the same value by Step \ref{step:allCircEqual}, all minimizer states have as value $\val(e)$. The maximizer can certainly achieve $\val(e)$, since he picks the action with the maximal value. He can not achieve more than $\val(e)$, because this is defined to be the best exit from the EC, and thus the best value that can be achieved from any state in the EC.
\end{itemize}

\item Since $T$ is a MEC of the game with some available actions removed, it certainly is an EC in the original game. Thus, from this and the previous step, we know that $T$ is a SEC.

\item $T$ is an MSEC, because if there was some $T' \varsupsetneq T$, such that $T'$ is a SEC, then there exists some state $\state \in T' \setminus T$ with all staying actions of this state having the value $\exit<\val>[\Box](T)$. If it is a minimizer state, it cannot have a lower exit available, because otherwise $T'$ would not be a SEC. Thus no staying action of $\state$ is removed by  Line \ref{line:findRemove}, and it should also be part of the MEC in the modified game. This contradicts the assumption that $\state \notin T$, and thus $T$ is an MSEC.
\end{enumerate}

\item $T$ is an MSEC $\implies$ $T \in$ \FIND$(\val)$
\begin{enumerate}
\item Let $T$ be an MSEC. 
We need to show that $T$ is a MEC of the game where the set of actions $X \eqdef \{(\state,\{\action\in\Av(\state) \mid \val(\state,\action) > \val(\state)\}) \mid \state \in \states<\circ>\}$ has been removed.

\item Since $T$ was an inclusion maximal component in the original game and by removing actions we cannot make an end component larger, if $T$ is an EC in the modified game, it is a MEC.

\item So we now show that $T$ is a MEC in the modified game:\\

Since $T$ is a MEC in the original game, there exists a $B$ such that for each $\state \in T, \action \in B \cup \Av(\state)$ we do not have $(\state,\action) \leaves T$.
Hence for all $\state \in T, \action \in B \cup \Av(\state)$ it holds that
\begin{align*}
\val(\state,\action) &=  \sum_{s' \in S} \trans(\state,\action,\state') \cdot \val(\state') \tag{by Equation \ref{eq:Vsa}} \\
&= \sum_{s' \in S} \trans(\state,\action,\state') \cdot \exit<\val>[\Box](T) \tag{since $\state' \in T$ and $T$ is simple}\\
&= \exit<\val>[\Box](T) 
\end{align*}
So every action that stays in $T$ (i.e. does not exit $T$), in particular every action in $B$, is not removed from the game, because for all $\state \in T, \action \in \Av(\state), \neg (\state,\action) \leaves T: \val(\state) = \exit<\val>[\Box](T) = \val(\state,\action)$. The first equality comes from $T$ being an MSEC, the second is what we have just shown.

Since no action in $B$ is removed from the game, $T$ is still a MEC after the removal, and hence $T \in$ \FIND$(\val)$ .
\end{enumerate}
\end{itemize}
\qed
\end{proof}

\subsection{Proof of Lemma \ref{lem:corrADJ}}\label{app:corrADJ}

\setcounter{lemma}{2}

\begin{lemma}[$\ADJUST$ is sound]
For any $f:\states\to[0,1]$ such that $f\geq\val$ and any EC $T$,
$\ADJUST(T,f)\geq\val$.
\end{lemma}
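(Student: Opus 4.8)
The plan is to reduce the statement to a single inequality about $\val$ and then establish that inequality by a stopping-time argument. Recall that $\ADJUST(T,f)$ only modifies $f$ on states of $T$, replacing $f(\state)$ by $\min(f(\state),\exit<f>[\Box](T))$; outside $T$ it leaves $f\geq\val$ untouched, and on a state of $T$ where the minimum is attained by $f(\state)$ itself nothing changes. Hence it suffices to show $\exit<f>[\Box](T)\geq\val(\state)$ for every $\state\in T$. Since $f\geq\val$ pointwise, $f(\state,\action)=\sum_{\state'}\trans(\state,\action,\state')f(\state')\geq\val(\state,\action)$ for all $\state,\action$, and taking the maximum over the same set of Maximizer-\leaving\ actions gives $\exit<f>[\Box](T)\geq\exit<\val>[\Box](T)$. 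So the whole lemma reduces to the core claim $\val(\state)\leq m$ for all $\state\in T$, where $m:=\exit<\val>[\Box](T)$. (Note that $\target$ forms its own end component, so any EC containing $\target$ equals $\{\target\}$; this is the only EC on which the claim can fail, and it is never returned by $\FIND$ as it is not simple, so we may assume $\target\notin T$.)

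For the core claim I would fix an arbitrary Maximizer strategy $\straa$ and define the following Minimizer strategy $\strab$: as long as the play stays in $T$, on Minimizer's states of $T$ play an action of a witnessing set $B$ for the EC (which exists and keeps the play in $T$), and from the first moment a \leaving\ action is used, switch permanently to a globally optimal Minimizer strategy $\strab^\ast$. Let $\theta$ be the first time a \leaving\ action is played. Before $\theta$ the play remains in $T$, and since under $\strab$ Minimizer never plays a \leaving\ action, $\state<\theta>\in\states<\Box>$ and $\action<\theta>$ is a Maximizer \leaving\ action, so $\val(\state<\theta>,\action<\theta>)\leq m$ by definition of $m$. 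Because $\target\notin T$, the target can only be reached after $\theta$, and from $\state<\theta+1>$ onwards Minimizer plays $\strab^\ast$, so the conditional probability of reaching $\target$ is at most $\val(\state<\theta+1>)$. By the tower property, $\expectation[\val(\state<\theta+1>)\mid\mathcal F_\theta]=\val(\state<\theta>,\action<\theta>)\leq m$ on $\{\theta<\infty\}$, whence $\pr_{\state}^{\straa,\strab}(\Diamond\target)\leq\expectation[\mathbf 1_{\theta<\infty}\,\val(\state<\theta+1>)]\leq m$. As $\straa$ was arbitrary and $\val(\state)=\inf_{\strab'}\sup_{\straa}\pr_{\state}^{\straa,\strab'}(\Diamond\target)\leq\sup_{\straa}\pr_{\state}^{\straa,\strab}(\Diamond\target)$, we obtain $\val(\state)\leq m$.

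The delicate point, and the main obstacle, is that a \leaving\ action need not move outside $T$ with probability one: its successor distribution may place mass both inside and outside $T$, so a naive ``Minimizer stays forever'' argument breaks when the play re-enters $T$. This is exactly why I define $\theta$ through the \emph{action} being \leaving\ (rather than through the play actually leaving $T$) and let Minimizer commit to $\strab^\ast$ immediately afterwards: this makes the identity $\expectation[\val(\state<\theta+1>)\mid\mathcal F_\theta]=\val(\state<\theta>,\action<\theta>)$ hold for the \emph{full}, unconditioned successor distribution, and it absorbs any later re-entry into $T$ into the optimal-play bound $\val(\state<\theta+1>)$. For the subcase $\exit<\val>[\Box](T)\leq\exit<\val>[\circ](T)$ one could alternatively bypass the stopping-time argument and simply invoke Lemma~\ref{lem:illu} with $m=\exit<\val>[\Box](T)$: the resulting Bellman solution is constant $m$ on $T$ and, being no smaller than the least solution $\val$, yields $\val(\state)\leq m$ directly; the stopping-time argument has the advantage of covering all ECs uniformly.
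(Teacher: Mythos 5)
Your proof is correct and its skeleton coincides with the paper's: restrict attention to states of $T$ where the minimum in $\ADJUST$ is realized by $\exit<f>[\Box](T)$, observe $\exit<f>[\Box](T)\geq\exit<\val>[\Box](T)$ from $f\geq\val$, and reduce everything to the single inequality $\val(\state)\leq\exit<\val>[\Box](T)$ for $\state\in T$. The difference lies in how that inequality is treated: the paper dismisses it with the one-line remark that ``no state can achieve a greater value than the best exit'', whereas you actually prove it, constructing the Minimizer strategy that plays the EC-witnessing actions until the first \leaving\ action is used and then switches to an optimal $\strab^\ast$, and bounding $\pr(\Diamond\target)$ through the stopping time $\theta$. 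Your handling of the point that a \leaving\ action may land back inside $T$ with positive probability---defining $\theta$ via the \emph{action} rather than the realized successor and conditioning on $\mathcal F_\theta$---is exactly what makes the argument close, and your observation that the inequality (hence the lemma as literally stated, for ``any EC'') fails for $T=\{\target\}$ identifies a genuine gap in the paper's formulation that its own proof silently skips over. Two minor caveats: the strategy $\strab$ you build has memory, so you are implicitly invoking the standard (and cited) fact that memoryless Minimizer strategies achieve the same infimum; and the shortcut via Lemma~\ref{lem:illu} only covers ECs with $\exit<\val>[\Box](T)\leq\exit<\val>[\circ](T)$, as you note yourself, so the stopping-time argument is the one doing the work in general.
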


\begin{proof}
Let $T$ be an EC and $f:\states\to[0,1]$ such that $f\geq\val$. 
\begin{enumerate}
\item We reformulate the goal to saying that for all state $\state \in T$ it holds that $\min(f(\state), \exit<f>[\Box](T)) \geq \val(\state)$. \\
This is equivalent to our goal, because the change in Line \ref{line:adjust} is the only change the Algorithm \ref{alg:adjust} applies and because the comparison of the functions $\ADJUST(T,f)$ and $\val$ is pointwise.
\item If in Line \ref{line:adjust} of $\ADJUST$ $\min(f(\state), \exit<f>[\Box](T))$ gets evaluated to $f(\state)$ or if $f(\state)=\exit<f>[\Box](T)$, by assumption of $f\geq\val$ the goal trivially holds.
\item If $f(\state) > \exit<f>[\Box](T)$,  the following chain of equations proves our goal: 
	\begin{align*}
	\ADJUST(T,f)(\state)& =\exit<f>[\Box](T)\\
	& \geq \exit<\val>[\Box](T) \tag{Since $f\geq\val$}\\
	& \geq \val(\state) \tag{Since no state can achieve a greater value than the best exit}
	\end{align*}
	
\end{enumerate}
\qed
\end{proof}

\subsection{Proof of Theorem \ref{BVIA_term}}

\setcounter{theorem}{1}

\begin{theorem}[Soundness and completeness]\label{BVIA_term}
Algorithm~\ref{alg:gen} (calling Algorithm~\ref{alg:BVIA}) produces monotonic sequences $\lb$ under- and $\ub$ over-approximating $\val$, and terminates.
\end{theorem}
\begin{proof}
We denote by $\lb<i>$ and $\ub<i>$ the lower/upper bound function after the $i$-th call of $\UPDATE$.
\begin{enumerate}
\item The fact that $\lb<i>$ and $\ub<i>$ are monotonic under- respectively over-approximations of $\val$ comes from the fact that they are updated via Bellman updates, which preserve monotonicity and the under-/over-approximating property (this can be shown by a simple induction), and from Lemma \ref{lem:corrADJ}.
\item We still need to prove that the algorithm terminates, i.e. that for all $\epsilon > 0$ there exists an $n$ such that $\ub<n>(\initstate)-\lb<n>(\initstate) < \epsilon$ .\\
\item Since $\lim_{n \to \infty} \lb<n> = \val$ (from e.g. \cite{visurvey}), it suffices to show that $\lim_{n \to \infty} \ub<n> - \val = 0$. In the following, let $\diff(\state) \eqdef \lim_{n \to \infty} \ub<n> - \val$.
\item Assume for contradiction that there exist states with $\diff(\state)>0$.\label{assmContr}
\item Let $\displaystyle X \eqdef  \set{s \mid \state \in \states \land \diff(\state) = \max_{s \in \states} \diff(\state)}$ be the set of those states that have the maximal difference. 
\item Since $\diff > 0$ is required by Step \ref{assmContr}, $\set{\target,\sink} \cap X = \emptyset$. 
\item Hence $X \subset \states$ and there have to be states outside of $X$. 
\item If $X$ is not a SEC, the proof of Theorem \ref{lem:convBVIwo} proves our goal. So we continue with the assumption that $X$ is a SEC.
$X$ also is an MSEC, because the upper bound is propagated to all states in $X$, as all of them can achieve the bound of the best exit.
\item At some point $\lb<i>$ converges close enough to $\val$ to fix all decisions of the Minimizer correctly, and then $X \in \FIND(\lb<i>)$.
	\begin{itemize}
	\item This need not happen surely, but we can have the case where two actions have equal value, but different $\lb$ forever. But that case is no problem, since then the part that seems to have the smaller value is deflated, and the other sets its upper bounds accordingly.
	\end{itemize}
\item If there was no leaving state-action pair for player $\Box$, $\ub(\state)=0$ for all states in $X$ by Line \ref{line:adjust} of $\ADJUST$, and hence the difference would be 0. Thus we continue with the assumption that there exists a leaving state-action pair for player $\Box$.
\item Without loss of generality, let $\state<\ell> \in X_\Box, \action<\ell> \in \Av(\state<\ell>)$ be a leaving state-action pair with
\[\lim_{n \to \infty} \ub<n>(\state<\ell>,\action<\ell>) = \exit<\ub>[\Box](X).\]
\item If $\ub(\state<\ell>,\action<\ell>) = \ub(\state<\ell>)$, we are done by the same chain of equations as in Step \ref{finallyDeriveContr} of the proof of Theorem \ref{lem:convBVIwo}.
\item If $\ub(\state<\ell>,\action<\ell>) \neq \ub(\state<\ell>)$, it holds that $\ub(\state<\ell>,\action<\ell>) < \ub(\state<\ell>)$ by Equation \ref{eq:Ls} and since $\state<\ell> \in X_\Box$. 
\item Then, since $\ub(\state<\ell>,\action<\ell>)$ is the maximal leaving upper bound, all upper bounds in $X$ get decreased to it by \ref{line:adjust} of $\ADJUST$.  This is a contradiction, because $\diff$ is defined as the limit of the difference. Thus, our assumption was wrong and it holds for all states
    $\state \in \states: \diff(\state)=0$.
\end{enumerate}
Note that all our algorithms take finite time, as they are monotonic on a finite domain, i.e. we always execute something for each member of a finite set.
\qed
\end{proof}




\end{document}